\DeclareMathAlphabet\mathbfcal{OMS}{cmsy}{b}{n}
\def\@IEEEsectpunct{:\ \,}
\def\paragraph{\@startsection{paragraph}{4}{\z@}{1.5ex plus 1.5ex minus 0.5ex}%
{0ex}{\normalfont\normalsize\itshape\bfseries}}
\DeclareMathOperator{\polylog}{polylog}
\begin{document}
%
\title{Coded Caching in Networks with Heterogeneous User Activity}
%
%
%
\author{\IEEEauthorblockN{Adeel~Malik, Berksan~Serbetci, Petros~Elia}
\thanks{The authors are with the Communication Systems Department at EURECOM, Sophia Antipolis, 06410, France (email: malik@eurecom.fr, serbetci@eurecom.fr, elia@eurecom.fr). The work is supported by the European Research Council under the EU Horizon 2020 research and innovation program / ERC grant agreement no. 725929 (project DUALITY).}}


\newtheorem{axiom}{Axiom}
\newtheorem{lemma}{Lemma}
\newtheorem{corollary}{Corollary}
\newtheorem{theorem}{Theorem}
\newtheorem{proposition}{Proposition}
\newtheorem{observation}{Observation}
\newtheorem{definition}{Definition}
\newtheorem{remark}{Remark}
\newtheorem{example}{Example}
\newtheoremstyle{case}{}{}{}{}{}{:}{ }{}
\theoremstyle{case}
\newtheorem{case}{Case}

\newtheorem{problem}{Problem}[section]

\maketitle
\begin{abstract}
This work elevates coded caching networks from their purely information-theoretic framework to a stochastic setting, by exploring the effect of random user activity and by exploiting correlations in the activity patterns of different users. In particular, the work studies the $K$-user cache-aided broadcast channel with a limited number of cache states, and explores the effect of cache state association strategies in the presence of arbitrary user activity levels; a combination that strikes at the very core of the coded caching problem and its crippling subpacketization bottleneck. We first present a statistical analysis of the average worst-case delay performance of such subpacketization-constrained (state-constrained) coded caching networks, and provide computationally efficient performance bounds as well as scaling laws for any arbitrary probability distribution of the user-activity levels. The achieved performance is a result of a novel user-to-cache state association algorithm that leverages the knowledge of probabilistic user-activity levels. 

We then follow a data-driven approach that exploits the prior history on user-activity levels and correlations, in order to predict interference patterns, and thus better design the caching algorithm. This optimized strategy is based on the principle that users that overlap more, interfere more, and thus have higher priority to secure complementary cache states. This strategy is proven here to be within a small constant factor from the optimal. Finally, the above analysis is validated numerically using synthetic data following the Pareto principle. To the best of our understanding, this is the first work that seeks to exploit user-activity levels and correlations, in order to map future interference and design optimized coded caching algorithms that better handle this interference.
\end{abstract}

\begin{IEEEkeywords}
 Coded caching, shared caches, load balancing, heterogeneous networks, femtocaching.
\end{IEEEkeywords}

\IEEEpeerreviewmaketitle

%
\IEEEpeerreviewmaketitle

 \setlength{\abovedisplayskip}{3pt}   
 \setlength{\belowdisplayskip}{3pt}    
 \setlength{\skip\footins}{0.5cm}

\section{Introduction}
\IEEEPARstart{T}{he} volume of mobile data traffic is rapidly growing, and soon existing networks will not have enough bandwidth resources to support this dramatically increasing demand~\cite{cisco2020}. In this context, caching offers a promising means of increasing efficiency by proactively storing part of the data at the network edge~\cite{shanmugam_it13}, including at wireless communication stations as well as on end-user devices~\cite{golrezaei_wc14,ji_jsac16}.

While generally caching is based on the idea that storing data can allow a receiving node to have easy access to \emph{its own} desired file, recent work has shown the powerful effects of exploiting the existence of the aforementioned desired file \emph{at the caches of other receiving users}. In interference-limited scenarios --- such as in downlink settings exemplified by the broadcast channel where each user has access to their own cache and requires their own distinct file --- the findings in~\cite{man_it14} suggest that a proper use of caching can allow for single multicast transmissions to simultaneously serve many users each having their own distinct demands. This breakthrough in the way caching is perceived, is based on the ideas of index coding which tells us that when the stored content in one user's cache overlaps with other users' requests, one can design multicast transmissions (in the form of XORs or other linear combinations of desired data), that allow for rapid delivery of any possible set of demands. Index coding --- which is generally a computationally hard problem~\cite{langberg2011hardness} ---  has received significant attention in a literature that has explored its performance limits~\cite{bar2011index, maleki2014index}, as well as its strong connections to the network coding problem~\cite{fragouli2007network, effros2015equivalence}. One main difference between index coding and network coding is that index coding specializes on cache-related cases in the sense that it considers receivers that benefit from side-information, which in our case can be found, for example, in the caches.

Motivated by index coding and its ability to exploit receivers' side information to create coded multicasting opportunities for users requesting different files, the seminal work in~\cite{man_it14} has introduced the concept of \emph{coded caching}. This work revealed that --- under some theoretical assumptions, and in the presence of a deterministic information-theoretic broadcast framework ---  the use of caching at the receivers can allow the simultaneous delivery of an unlimited number of user-requests, with a limited delay. This astounding conclusion was achieved by carefully designing a combinatorial clique-based cache placement algorithm, and a synergistic delivery scheme that enables transmitting independent content to multiple users at a time. In essence, coded caching associates each receiving user to its own cache state in a manner that allows for the custom design of a long sequence of high-capacity index coding problems that are served one after the other. As we will see soon though, this requirement that each user has their own cache state, is an assumption that --- in essence --- cannot hold. 

To see this, let us quickly recall that in its original setting, coded caching considers a unit-capacity single-stream broadcast channel (BC), where a transmitting base station (BS) has access to a library (catalog) of $N$ unit-sized files, and serves $K$ receiving users each equipped with a cache of size equal to the size of $M$ files, or equivalently equal to a fraction $\gamma = \frac{M}{N}$ of the library. In this context, the work in~\cite{man_it14} provides a novel placement and delivery scheme that can serve any set of $K$ simultaneous requests with a worst-case delivery time of $T = \frac{K(1-\gamma)}{1+K\gamma} \approx \frac{1-\gamma}{\gamma}$. This ability to serve a theoretically ever-increasing number of users with a bounded delay, is a direct result of exploiting the cache-enabled multicasting opportunities that allow for delivery to $K\gamma + 1$ users at a time.

As suggested above though, coded caching has a serious Achilles' heel. In particular, for the above performance to be guaranteed, coded caching requires that each user be allocated their own specifically-designed cache state (cache content), which --- without delving into the esoteric details of coded caching --- effectively requires the partitioning of each library-file into $K \choose K\gamma$ subpackets. This number scales exponentially in $K$, and thus requires files to be of truly astronomical sizes. Thus given any reasonable constraint on the file sizes, the number of cache states is effectively forced to be reduced, and the aforementioned coding gains are indeed diminished to gains that are considerably less than $K\gamma+1$. What this file-size constraint (also known as the subpacketization bottleneck) effectively forces is the reduction of the number of cache states\footnote{This simply means that even though there are $K$ different users, each with their own physical cache, in essence, there can only exist $\Lambda$ distinct caches, that must be shared among the users. This effectively means that groups of users are forced to have identical, rather than complementary, cache contents.} to some $\Lambda\ll K$, which --- under the basic principles of the clique-based cache-placement in~\cite{man_it14} --- allows for a smaller subpacketization level $\Lambda \choose \Lambda\gamma $ $\ll$ $ K \choose K\gamma$ at the expense though of a much reduced coding gain $\Lambda\gamma+1 \ll K\gamma+1 $ and a much larger delay $T = \frac{K(1-\gamma)}{1+\Lambda\gamma}$ which is now unbounded. For more details on this, the reader can refer to~\cite{shanmugam_it16,lampiris_jsac18}.

\subsection{The connection between coded caching, complementary cache states, user-activity levels and user-activity correlations}
The performance of coded caching in the presence of an inevitably reduced number of cache states, has been explored in various works that include the work in~\cite{jin_toc_19} which introduced a new scheme for this setting, and the work in~\cite{parrinello_it20} which established the fundamental limits of the state-limited coded caching setting, by deriving the exact optimal worst-case delivery time as a function of the user-to-cache state association profile that represents the number of users served by each cache.

As we witness in the above works, in order to maintain the ability to jointly exploit multicasting opportunities, users must be associated to complementary cache states that are carefully designed and which cannot be identical. The above findings reveal that a basic problem with the state-limited scenario (where $\Lambda \ll K$) in coded caching is simply the fact that if two or more users are forced to share the same cache state (i.e., the same content in their caches), then these users generally do not have the ability to jointly receive a multicasting message that can be useful to all.  Such state-limited scenario results in the aforementioned large deterioration in performance, irrespective of the user-to-cache association policy. What we additionally learn from the work in~\cite{malik_tcom21} is that if the users are assigned states at random, then this randomness imposes an additional \emph{unbounded} performance deterioration that is a result of `unfortunate' associations where too many users share the same cache state. That is why the task of user-to-cache state association is important. 

At the same time though, coded caching experiences a certain synchronization aspect, which is a direct outcome of the fact that users are expected to be partially asynchronous in their timing of requesting files. Hence, the notion of time is of essence. This asynchronicity has a negative aspect, but also a positive one; both of which we explore here. On the one hand, having only a fraction of the users appear simultaneously, implies a smaller number of users that can simultaneously participate in coded caching and thus implies potentially fewer multicasting opportunities and thus a smaller coding gain. On the other hand, such asynchronicity implies less instantaneous interference. 
This is where user activity levels come into the picture, and this is where user activity correlations can be exploited. In essence --- as it will become clearer later on --- any users that are correlated in terms of their activity in time, should be associated to different cache states, as this is essential in using caches for handling their mutual interference. On the other hand, knowing that some users rarely request data at the same time, allows us to give them the same cache state resource. In essence, users that overlap more, interfere more, and thus have higher priority to secure complementary cache states. This optimization effort is particularly important because, as we recall, these resources are indeed scarce. By exploring user activities and learning from their history, we are able to predict interference patterns, and then we are able to assign cache states accordingly. This is, to the best of our understanding, the first work that seeks to exploit user-activity levels and correlations, in order to map future interference and provide optimized caching algorithms that better handle this interference.

\subsection{Network setting}
We consider a cache-aided wireless network, which consists of a base station and $K$ cache-enabled receiving users. The base station (BS) has access to a library of $N$ equisized files $\mathcal{F}=\left[F_1, F_2, \dots, F_N \right]$ and delivers content via a broadcast link to $K$ receiving users. Each user $k \in \left[1, 2, \dots, K \right]$ is equipped with a cache of normalized storage capacity of $\gamma \triangleq \frac{M}{N} \in [0,1]$, and requests a file from the content library with probability $p_k$. We use $\textbf{p}=  \left[p_1, p_2, \dots,p_K\right]$ to denote the users \emph{activity level vector}. At any instance, if a user $k$ is requesting a file, then we say that the user $k\in [K]$ is an \emph{active user}. Naturally $K_{\mathbf{p}}= \sum_{k=1}^K p_k$ is the expected number of active users. Figure~\ref{fig:SM} depicts an instance of our cache-aided wireless network.
 \begin{figure}[t]
\centering
\includegraphics[width=.7\linewidth]{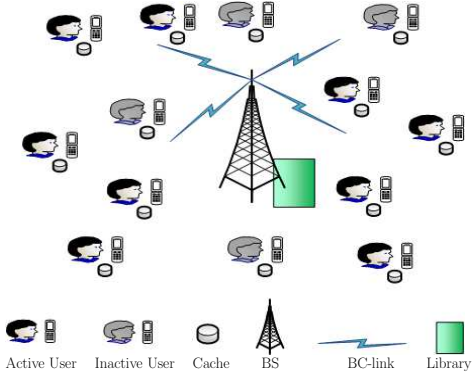}
\caption{An instance of a cache-aided wireless network.}\label{fig:SM}
\end{figure}


The communication process consists of two phases; the \emph{placement phase} and the \emph{delivery phase}. During the placement phase, each user's cache is filled with the content from the library, and this phase is oblivious to the upcoming number of users in the delivery phase, as well as is oblivious to the upcoming file demands. The delivery phase begins with \emph{the active users} simultaneously requesting one file each, and continues with the BS delivering this content to the users. This phase is naturally aware of the demands of the active users, as well as is aware of the content cached at each user.


\noindent\textbf{\emph{Placement phase}:} 
We consider the subpacketization-constrained uncoded cache placement scheme based on~\cite{man_it14}. Let $B_{max}$ denote the maximum allowable subpacketization of a file, which defines the maximum number of cache states as follows
\begin{align*}
  \Lambda= \arg \underset{k\leq K}{ \max} \left\{ {k \choose k\gamma} \leq B_{max}\right\}.  
\end{align*}
Each file $F_i \in \mathcal{F}$ is partitioned into ${\Lambda \choose t}$ distinct equisized subpackets, where $t \triangleq \Lambda\gamma$ for some $t \in [1, \dots, \Lambda]$. Then we index each subpacket of a file by a distinct subset $\tau \subseteq [1, \dots, \Lambda]$ of size $t$. The set of indexed subpackets corresponding to file $F_i \in \mathcal{F}$ is given by $\left\{F_{i,\tau}: \tau \subseteq [1, \dots, \Lambda], \left|\tau\right|=t\right\}$.  The content corresponding to each cache state $ \lambda \in [1, \dots, \Lambda]$ is then given by
\begin{align*}
C_{\lambda} =   \left\{F_{i,\tau}: i\in[1, \dots, N], \lambda \in \tau, \tau \subseteq [1, \dots, \Lambda], \left|\tau\right|=t\right\},
\end{align*}
where each cache state consists of $\left|C_{\lambda} \right| = N{\Lambda-1 \choose t-1}$ subpackets, which abides by the cache-size constraint since $N\frac{{\Lambda-1 \choose t-1}}{{\Lambda \choose t}} = M$.


During the placement phase, each user's cache is filled with the content of one of the cache states $\lambda \in [1, \dots, \Lambda]$. The employed user-to-cache state association is defined by a matrix $\mathbf{G}=  [0,1]^{\Lambda \times K}$, of which the $(\lambda,k)$ element $g_{\lambda,k}$ takes the value 1 if user $k$ is storing the content of cache state $\lambda \in [\Lambda]$, else $g_{\lambda,k} = 0$. We denote by $\mathbf{G}_{\lambda}$ the set of users caching the content of cache state $\lambda\in [1, \dots, \Lambda]$.

\noindent\textbf{\emph{Delivery phase}:} The delivery phase commences with each active user requesting a single file from the content library. In line with the common assumptions in coded caching, we assume that requests are generated simultaneously by active users, and that each active user requests a different file.  During this phase, the BS is aware of the user-to-cache state association matrix $\mathbf{G}$. Once the BS receives the users' requests, it commences delivery of the coded subpackets over a unit-capacity\footnote{Here the capacity is measured in units of file.} error-free broadcast link. Here, together with the aforementioned optimal placement, we also consider the optimal\footnote{Optimality here refers to the performance of the scheme over the traditional (deterministic) coded caching problem with \emph{constant} user activity.} multi-round delivery scheme of~\cite{jin_toc_19,parrinello_it20}. At any instance of the problem, the \emph{cache load vector} is denoted by $\mathbf{V}= \left[v_1,  \dots, v_{\Lambda} \right]$, where $v_{\lambda}$ represents the number of active users that are associated with cache state $\lambda \in [\Lambda]$.  Additionally, we use $\mathbf{L}= \left[l_1,  \dots, l_{\Lambda} \right] = sort(\mathbf{V})$ to be the \emph{profile vector}, which is the sorted (in descending order) version of the cache load vector $\mathbf{V}$.

\subsection{Metrics of interest}
To capture the randomness in user activity, we consider --- for any given user-to-cache state association matrix $\mathbf{G}$ --- the averaging metric
\begin{equation}
\overline{T}(\mathbf{G}) \triangleq E_{\mathbf{V}}[T(\mathbf{V})] = \sum_{\mathbf{V}}P(\mathbf{V})T(\mathbf{V}),
\end{equation}
where $P(\mathbf{V})$ is the probability of $\mathbf{V}$, and where $T(\mathbf{V})$ is the worst-case delivery time\footnote{The time scale is normalized such that a unit of time corresponds to the optimal amount of time needed to send a single file from the BS to the user, had there been no caching and no interference.} needed to complete the delivery of requested files given a certain cache load vector $\mathbf{V}$ associated to matrix $\mathbf{G}$. For any cache load vector $\mathbf{V}$ such that $sort(\mathbf{V}) = \mathbf{L}$, the information-theoretically optimal delivery time --- achieved with the multi-round delivery scheme~\cite{jin_toc_19,parrinello_it20} --- takes the form
\begin{align}\label{eq:TL}
 T(\mathbf{L})  =    \sum_{\lambda=1}^{\Lambda-t} l_{\lambda} \frac{{\Lambda-\lambda \choose t} }{{\Lambda \choose t}}.
 \end{align}
Thus, the average delay takes the form
\begin{align}\label{eq:tavg}
\overline{T}(\mathbf{G})= \sum_{\mathbf{L} \in \mathcal{L} }P(\mathbf{L})  T(\mathbf{L}) &=   \sum_{\lambda=1}^{\Lambda-t} \sum_{\mathbf{L} \in \mathcal{L} }P(\mathbf{L}) l_{\lambda}     \frac{ {\Lambda-\lambda \choose t} }{{\Lambda \choose t}}\nonumber\\  &=   \sum_{\lambda=1}^{\Lambda-t} E[ l_{\lambda} ]   \frac{ {\Lambda-\lambda \choose t} }{{\Lambda \choose t}},
 \end{align}
\noindent where $\mathcal{L}$ describes the set of all possible profile vectors $\mathbf{L}$, where $P(\mathbf{L})$ is the probability of a profile vector $\mathbf{L}$ \emph{given the user-to-cache state association} $\mathbf{G}$, and where $E[l_{\lambda} ]$ is the expected number of active users in the $\lambda$-th most loaded cache, again given $\mathbf{G}$.



Our interest is in finding the optimal user-to-cache association that minimizes the average delay. This corresponds to the following optimization problem.
\begin{problem} \label{eq:opt_P1}
\begin{align} 
\underset{\mathbf{G}} {\min} \ \ \ \overline{T}(\mathbf{G})
\end{align}
subject to
\begin{align}\label{eq:Cons}
 &\sum_{i=1}^{\Lambda}g_{i,k}=1 \ \ \forall   k\in [K].
\end{align}
 \end{problem}
\subsection{Our contribution}~\label{sec:contribution}
In this work, we analyze a state-constrained coded caching network of $K$ cache-aided users with $\Lambda$ cache states, when users have different activity levels, and the association between users and cache states is subject to an arbitrary grouping strategy $\mathbf{G}$. Our aim is to provide analytical bounds on the performance. We will do so either in a manner that is numerically tractable, or in the form of asymptotic approximations that offer direct insight. The following are our contributions, step by step.
\begin{itemize}
\item In Section~\ref{sec:results_arb}, for any arbitrary user activity level vector $\mathbf{p}$ and any arbitrary user-to-cache state association $\mathbf{G}$,
\begin{itemize}
\item We derive a bound on the average delay $\overline{T}(\mathbf{G})$. This bound can be evaluated in a computationally efficient manner.
\item We characterize the scaling laws of $\overline{T}(\mathbf{G})$ which take clear and insightful forms.
\item Based on the insights from the derived bounds, we propose a new user-to-cache association algorithm that seeks to minimize the average delay.
\end{itemize}
\item In Section~\ref{sec:results_iden}, we analyze the special case of uniform user activity statistics, and uniform user-to-cache state association $\mathbf{G}$. For this setting, we provide analytical upper and lower bounds on the performance, and show that the bounds have a bounded gap between them and thus a bounded gap to the optimal. Then, we proceed to characterize the exact scaling laws of $\overline{T}(\mathbf{G})$.
\item In Section~\ref{sec:results_data}, we extend our analysis to the data-driven setting, where --- in designing the caching policy --- we are able to learn from the past $S$ different demand vectors. Using this bounded-depth user-request history, we propose a heuristic user-to-cache state association algorithm which is simple to implement and which we prove here to be at most at a factor of $\frac{\log{S}}{\log\log{S}}$ from the optimal. This factor, as we argue later below, remains less than 3-4 for any reasonable scenario\footnote{If we consider a scenario where we assign caches to users once a day, and assuming that independent demand vectors appear once every $30$ minutes, then the number $S$ is at most $2\times 24 = 48$ which implies a gap of approximately $2.3$. If instead we assign caches once a week, $S$ becomes $7\times 2\times 24 = 336$ and the gap is approximately $2.7$. If this depth changes to a much larger $S = 12\times336$ corresponding to a history window of 4 weeks, and a demand vector --- for those same $K$ co-located users --- every 10 minutes, then the gap is bounded at 3.3.}, which is validated numerically using synthetic data following the Pareto principle.
\item In Section~\ref{sec:validation}, we perform extensive numerical evaluations that validate our analysis.
\end{itemize}

\subsection{Notations}~\label{sec:notations}
Throughout this paper, for $n$ a positive integer, we use the notation $[n]\triangleq\left[1, 2, \dots, n \right], \forall n \in \mathbb{Z}^+$. We use $\mathbf{A} / \mathbf{B}$ to denote the difference set that consists of all the elements of set $\mathbf{A}$ not in set $\mathbf{B}$. Unless otherwise stated, logarithms are assumed to have base 2. We also use the following asymptotic notation: i) $f(x)= O(g(x))$ will mean that there exist constants $a$ and $c$ such that $f(x) \leq ag(x), \forall x > c$, ii) $f(x) = o(g(x))$ will mean that  $\lim_{x \rightarrow \infty  }\frac{f(x)}{g(x)} =0 $, iii) $f(x)=\Omega(g(x))$ will be used if $g(x)= O(f(x))$, iv) $f(x) = \omega(g(x))$ will mean that  $\lim_{x \rightarrow \infty  }\frac{g(x)}{f(x)} =0 $, and finally v) $f(x)=\Theta(g(x))$ will be used if $f(x)= O(g(x))$ and $f(x)=\Omega(g(x))$.  We use the term $\polylog(x)$ to denote the class of functions $\bigcup_{k\geq 1} O((\log x)^k)$ that are polynomial in $\log x$.

\section{Main Results: Statistical Approach}~\label{sec:results}
In this section, we present our main results on the performance of a coded caching network of $K$ cache-aided users and $\Lambda$ cache states, where the user-activity levels follow an arbitrary probability distribution $\mathbf{p}$ and where the association between users and cache states is subject to an arbitrary association strategy $\mathbf{G}$.

We can see from~\eqref{eq:tavg} that for any given user-to-cache state association $\mathbf{G}$ and user-activity statistics $\mathbf{p}$, the exact evaluation of~\eqref{eq:tavg} is computationally expensive especially for large system parameters, as the creation of $\mathcal{L}$ is an integer partition problem, and the cardinality of $\mathcal{L}$ is known to be growing exponentially with system parameters $K$ and $\Lambda$~\cite{stojmenovic_ijcm98}.
Motivated by this complexity, we here proceed to provide computationally efficient bounds on the performance. After doing so, we resort to asymptotic analysis of the impact of $\mathbf{G}$ and $\mathbf{p}$ on the performance, and provide an insightful characterization of the scaling laws of this performance. Finally, based on the insights from these scaling laws, we propose a heuristic user-to-cache state association algorithm that aims to minimize the worst-case delivery time.

\subsection{Performance analysis with arbitrary activity levels}~\label{sec:results_arb}
In this subsection, we present the statistical analysis of our problem for the general setting of an arbitrary user-to-cache state association strategy $\mathbf{G}$ and an arbitrary activity level vector $\mathbf{p}$. Crucial to our analysis for this setting will be the mean $\mu_{\lambda}= \sum_{k\in G_{\lambda}} p_{k}$ and the variance $\sigma^2_{\lambda}= \sum_{k\in G_{\lambda}} p_{k} (1-p_{k})$ of the number of active users that are caching the content of cache state $\lambda \in [\Lambda]$. Now we proceed to present our first result which is the characterization of faster-to-compute analytical bounds on the performance.
\begin{theorem}\label{th:UBLB}
In a state-constrained coded caching network of $\Lambda$ cache states, $K$ cache-aided users with normalized cache capacity $\gamma$ and activity level vector $\mathbf{p}$, the average delay $\overline{T}(\mathbf{G})$ for a given user-to-cache state association strategy $\mathbf{G}$ is bounded as follows 
\begin{equation}\label{eq:UBtavg}
\overline{T}(\mathbf{G})\!\leq\!\frac{\Lambda\!-\!t}{1\!+\!t}\! \left(\!A\!-\!\sum_{x=0}^{A-1} \max\! \left(\!0,1\!-\!\Lambda\!+\!\sum_{\lambda=1}^{\Lambda} F_1(\lambda, x)\! \right)\!\right), \end{equation}
 \begin{align} \label{eq:LBtavg}
   \overline{T}(\mathbf{G}) &\geq \frac{\Lambda-t}{1+t} \frac{t}{ \Lambda-1}\left( A- \sum_{x=0}^{A-1} \frac{\sum_{\lambda=1}^{\Lambda} F_2(\lambda,x) }{\Lambda} \right)\nonumber\\ &+\frac{\Lambda-t}{1+t}\frac{K_{\mathbf{p}}}{\Lambda}  \frac{  \Lambda-t-1}{\Lambda-1},
 \end{align}
where $t=\Lambda \gamma$, $A=\max\left(\left\{|\mathbf{G}_{\lambda}|\right\}_{\lambda=1}^{\Lambda}\right)$, where $\mathbf{G}_{\lambda}$ is the set of users caching the content of cache state $\lambda$,  
\begin{align}\label{eq:lbcdf}
F_1(\lambda,x) = 
\begin{cases}\! 0  \hspace{2.7cm} \text{ \emph{if} } 0 \leq x\leq \mu_{\lambda}-1    \\
F_{bin}\left(|\mathbf{G}_{\lambda}|, \frac{\mu_{\lambda}}{|\mathbf{G}_{\lambda}|}, x\right)\hspace{0.2cm} \text{ \emph{if} } \mu_{\lambda} \leq x\leq |\mathbf{G}_{\lambda}|\\
1 \hspace{2.7cm} \text{ \emph{if} }  x> |\mathbf{G}_{\lambda}|,
\end{cases}
\end{align}
 \begin{align}\label{eq:ubcdf}
F_2(\lambda,x) = 
\begin{cases}
F_{bin}\left(|\mathbf{G}_{\lambda}|, \frac{\mu_{\lambda}}{|\mathbf{G}_{\lambda}|}, x\right) \hspace{0.2cm} \text{ \emph{if} }  0 \leq x\leq \mu_{\lambda} -1 \\
1 \hspace{2.7cm} \text{ \emph{if} }  x> \mu_{\lambda}-1,
\end{cases}
\end{align}
where $F_{bin}\left(n, q, x\right) = \sum_{i=0}^{x} {n \choose i} q^i \left(1-q\right)^{n-i}$ and where $\mu_\lambda=\sum_{k\in \mathbf{G}_{\lambda}} p_{k}$.
\end{theorem}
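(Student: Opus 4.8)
The plan is to first remove the combinatorially intractable sorting in $T(\mathbf{L})$ by re-expressing $\overline{T}(\mathbf{G})$ through a layer-cake decomposition of the order statistics $l_\lambda$. Set $w_\lambda \triangleq \binom{\Lambda-\lambda}{t}/\binom{\Lambda}{t}$ so that $w_\lambda$ is nonincreasing and vanishes for $\lambda>\Lambda-t$, and let $N(x)\triangleq|\{\lambda:v_\lambda>x\}|$ count the cache states holding more than $x$ active users. Using $P(l_\lambda>x)=P(N(x)\geq\lambda)$ together with $E[l_\lambda]=\sum_{x\geq0}P(l_\lambda>x)$ and exchanging the order of summation yields
\[ \overline{T}(\mathbf{G})=\sum_{x=0}^{A-1}E\left[W(N(x))\right],\qquad W(n)\triangleq\sum_{j=1}^{n}w_j, \]
where the truncation at $A-1$ follows from $v_\lambda\leq|\mathbf{G}_\lambda|\leq A$. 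The hockey-stick identity gives the anchor values $W(\Lambda)=\frac{\Lambda-t}{1+t}$ and $W(1)=\frac{\Lambda-t}{\Lambda}$, and since $w_\lambda$ is nonincreasing, $W$ is \emph{concave} on $\{0,\dots,\Lambda\}$ with $W(0)=0$. Concavity is the one structural fact driving both bounds.

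For \eqref{eq:UBtavg} I would use the pointwise estimate $W(n)\leq W(\Lambda)\,\mathbf{1}[n\geq1]$, valid for every integer $n\in\{0,\dots,\Lambda\}$, to get $E[W(N(x))]\leq W(\Lambda)\,P(N(x)\geq1)$; Markov's inequality then gives $P(N(x)\geq1)\leq\min(1,E[N(x)])$ with $E[N(x)]=\sum_\lambda P(v_\lambda>x)=\Lambda-\sum_\lambda F_{v_\lambda}(x)$. Replacing each true CDF $F_{v_\lambda}(x)$ by the \emph{lower} bound $F_1(\lambda,x)$ only enlarges $E[N(x)]$, preserving the inequality and producing $\min(1,\Lambda-\sum_\lambda F_1(\lambda,x))=1-\max(0,1-\Lambda+\sum_\lambda F_1(\lambda,x))$. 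Summing over $x$ and using $\sum_{x=0}^{A-1}1=A$ recovers \eqref{eq:UBtavg} with prefactor $W(\Lambda)=\frac{\Lambda-t}{1+t}$.

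For \eqref{eq:LBtavg} Jensen runs the wrong way, so I would bound the concave $W$ below by its secant through $(1,W(1))$ and $(\Lambda,W(\Lambda))$, which is valid for all integers $1\leq n\leq\Lambda$ and holds trivially at $n=0$:
\[ W(n)\geq s\,n+(W(1)-s)\,\mathbf{1}[n\geq1],\qquad s\triangleq\frac{W(\Lambda)-W(1)}{\Lambda-1}. \]
A short simplification gives $s=W(\Lambda)\frac{\Lambda-t-1}{\Lambda(\Lambda-1)}$ and $W(1)-s=W(\Lambda)\frac{t}{\Lambda-1}$, both nonnegative. Taking expectations and summing, the linear term contributes $s\sum_x E[N(x)]=s\,K_{\mathbf{p}}$ (since $\sum_x P(v_\lambda>x)=\mu_\lambda$ and $\sum_\lambda\mu_\lambda=K_{\mathbf{p}}$ exactly), giving the second summand of \eqref{eq:LBtavg}. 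For the indicator term I would use $N(x)\leq\Lambda$ to write $P(N(x)\geq1)\geq E[N(x)]/\Lambda$ and replace each $F_{v_\lambda}(x)$ by the \emph{upper} bound $F_2(\lambda,x)$, yielding the first summand.

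The crux is justifying the two-sided sandwich $F_1(\lambda,x)\leq F_{v_\lambda}(x)\leq F_2(\lambda,x)$, where $v_\lambda$ is the Poisson-binomial sum of the independent activity indicators of the users in $\mathbf{G}_\lambda$. I would obtain it from Hoeffding's comparison theorem: among all sums of $|\mathbf{G}_\lambda|$ independent Bernoullis with fixed mean $\mu_\lambda$, the homogeneous binomial $\mathrm{Bin}(|\mathbf{G}_\lambda|,\mu_\lambda/|\mathbf{G}_\lambda|)$ dominates $v_\lambda$ in the convex order and hence has heavier two-sided tails. This places its CDF \emph{below} that of $v_\lambda$ above the mean (so $F_1\leq F_{v_\lambda}$ on $x\geq\mu_\lambda$) and \emph{above} it below the mean (so $F_{v_\lambda}\leq F_2$ on $x\leq\mu_\lambda-1$), while the regimes $x>|\mathbf{G}_\lambda|$ and $x>\mu_\lambda-1$ are covered by the deterministic value $1$. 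Pinning down that the tail crossover sits exactly at the case boundaries in \eqref{eq:lbcdf}--\eqref{eq:ubcdf} is the delicate step; the remainder is bookkeeping resting on the concavity of $W$ and the linearity of expectation.
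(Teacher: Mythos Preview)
Your argument is correct and lands on exactly the same two intermediate inequalities the paper uses, namely $\overline{T}(\mathbf{G})\leq \frac{\Lambda-t}{1+t}\,E[l_1]$ and $\overline{T}(\mathbf{G})\geq \frac{\Lambda-t}{1+t}\bigl(\frac{t\,E[l_1]}{\Lambda-1}+\frac{K_{\mathbf{p}}}{\Lambda}\frac{\Lambda-t-1}{\Lambda-1}\bigr)$, followed by the same bounds on $E[l_1]$ and the same Hoeffding Poisson-binomial/binomial comparison. The packaging differs: the paper obtains the two $\overline{T}$-bounds by directly replacing $E[l_\lambda]$ with $E[l_1]$ (upper) and with the uniform value $\frac{K_{\mathbf{p}}-E[l_1]}{\Lambda-1}$ for $\lambda\geq 2$ (lower, via a Chebyshev-sum/majorization step), whereas you reach them through the layer-cake identity $\overline{T}=\sum_x E[W(N(x))]$ and the concavity of $W$, which makes the lower bound a clean secant inequality rather than an appeal to ``uniformity minimizes.'' Likewise, the paper cites the Caraux--Gascuel order-statistic bounds $P[l_1\leq x]\geq \max(0,1-\Lambda+\sum_\lambda F_{v_\lambda}(x))$ and $P[l_1\leq x]\leq \Lambda^{-1}\sum_\lambda F_{v_\lambda}(x)$, while you rederive them via $P(N(x)\geq 1)\leq E[N(x)]$ and $P(N(x)\geq 1)\geq E[N(x)]/\Lambda$. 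So the route is essentially the same in substance; your concavity framing is a tidier justification of the lower-bound step but does not change any ingredient or any constant.
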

\begin{proof} 
The proof is deferred to Appendix~\ref{AP:UBLB}. 
\end{proof}

\begin{remark}~\label{R:1}
The bounds in Theorem~\ref{th:UBLB} can be computed in a computationally-efficient manner, as for each $\lambda \in [\Lambda]$, their evaluation only requires to compute the binomial cumulative distribution function  $F_{bin}\left(|\mathbf{G}_{\lambda}|, \frac{\mu_{\lambda}}{|\mathbf{G}_{\lambda}|}, x\right)$ for all $x\in [0, 1, 2, \cdots,|\mathbf{G}_{\lambda}|]$ of a random variable with $|\mathbf{G}_{\lambda}|$ independent trials and $\frac{\mu_{\lambda}}{|\mathbf{G}_{\lambda}|}$ success probability\footnote{In theory, $F_{bin}\left(|\mathbf{G}_{\lambda}|, \frac{\mu_{\lambda}}{|\mathbf{G}_{\lambda}|}, x\right)$ needs to be calculated for all values of $x\in\left[0,|\mathbf{G}_{\lambda}|\right]$. However, it is known that there exists a $\tilde{x} \in\left[0,|\mathbf{G}_{\lambda}|\right]$, where $F_{bin}\left(|\mathbf{G}_{\lambda}|, \frac{\mu_{\lambda}}{|\mathbf{G}_{\lambda}|}, \tilde{x}\right) \approx 1$. By De Moivre-Laplace Theorem, it is known that binomial distribution can be approximated by the normal distribution in the limit of large $|\mathbf{G}_{\lambda}|$, and the well-known 68–95–99.7 rule states that $\tilde{x} << |\mathbf{G}_{\lambda}|$. Since $F_{bin}\left(|\mathbf{G}_{\lambda}|, \frac{\mu_{\lambda}}{|\mathbf{G}_{\lambda}|}, x\right) \approx 1$ for
any $x \geq \tilde{x}$, both~\eqref{eq:lbcdf} and~\eqref{eq:ubcdf}, and consequently~\eqref{eq:UBtavg} and~\eqref{eq:LBtavg} can be quickly evaluated with high accuracy.}.
\end{remark}

Next, we proceed to our next result, which provides the asymptotic analysis of the average delay $\overline{T}(\mathbf{G})$, in the limit of large $\Lambda$ and $K$. Let us quickly recall that $\mu_{\lambda}= \sum_{k\in G_{\lambda}} p_{k}$ and $\sigma^2_{\lambda}= \sum_{k\in G_{\lambda}} p_{k} (1-p_{k})$ are respectively the mean and variance of the number of active users that are associated with cache state $\lambda$.
\begin{theorem}\label{th:lawtavg}
In a state-constrained coded caching network of $\Lambda$ cache states, $K$ cache-aided users with normalized cache capacity $\gamma$ and activity level vector $\mathbf{p}$, the average delay $\overline{T}(\mathbf{G})$ for a given association strategy $\mathbf{G}$ scales as
\begin{align}\label{eq:lawtavg_ub}
\overline{T}(\mathbf{G})  =   O\left( \left(\frac{K_{\mathbf{p}}}{\Lambda}  + \sqrt{\sum_{i=1}^{\Lambda} (\sigma_i^2+ (\mu_i-\mu)^2) }\right) \frac{\Lambda-t}{1+t}\right),
 \end{align}
 and
 \begin{align} \label{eq:lawtavg_lb}
\overline{T}(\mathbf{G}) = \Omega\left(  \frac{K_{\mathbf{p}}}{\Lambda} \frac{\Lambda-t}{1+t}\right),
 \end{align}
where $\mu= \frac{1}{\Lambda}\sum_{\lambda=1}^{\Lambda} \mu_{\lambda}= \frac{K_{\mathbf{p}}}{\Lambda}$.
\end{theorem}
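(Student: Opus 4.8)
The plan is to derive both scaling laws from the exact expression $\overline{T}(\mathbf{G}) = \sum_{\lambda=1}^{\Lambda-t} E[l_\lambda]\, c_\lambda$ in \eqref{eq:tavg}, where I abbreviate $c_\lambda \triangleq \binom{\Lambda-\lambda}{t}/\binom{\Lambda}{t}$, together with the single algebraic identity $\sum_{\lambda=1}^{\Lambda-t} c_\lambda = \binom{\Lambda}{t+1}/\binom{\Lambda}{t} = \frac{\Lambda-t}{1+t}$, which follows from the hockey-stick identity $\sum_{j=t}^{\Lambda-1}\binom{j}{t} = \binom{\Lambda}{t+1}$. This identity already accounts for the common prefactor $\frac{\Lambda-t}{1+t}$ appearing in both \eqref{eq:lawtavg_ub} and \eqref{eq:lawtavg_lb}, so the remaining work is to control the load-dependent factor in each direction.

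For the upper bound I would first replace every sorted load $l_\lambda$ by the maximum load $l_1 = \max_\lambda v_\lambda$, which is legitimate since $l_\lambda \le l_1$ and $c_\lambda \ge 0$; this gives $T(\mathbf{L}) \le l_1 \sum_{\lambda=1}^{\Lambda-t} c_\lambda$ and, on taking expectations, $\overline{T}(\mathbf{G}) \le E[\max_\lambda v_\lambda]\,\frac{\Lambda-t}{1+t}$. It then remains to bound the expected maximum load. Writing $\mu = K_{\mathbf{p}}/\Lambda$ and using the elementary chain $\max_\lambda v_\lambda - \mu \le \max_\lambda|v_\lambda - \mu| \le \big(\sum_\lambda (v_\lambda - \mu)^2\big)^{1/2}$, followed by Jensen's inequality for the concave square root, I obtain $E[\max_\lambda v_\lambda] \le \mu + \big(\sum_\lambda E[(v_\lambda-\mu)^2]\big)^{1/2}$. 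Since $v_\lambda$ is a sum of independent Bernoulli variables with $E[v_\lambda]=\mu_\lambda$ and $\mathrm{Var}(v_\lambda)=\sigma_\lambda^2$, the bias–variance split gives $E[(v_\lambda-\mu)^2] = \sigma_\lambda^2 + (\mu_\lambda-\mu)^2$, producing exactly $\sqrt{\sum_\lambda(\sigma_\lambda^2+(\mu_\lambda-\mu)^2)}$ and establishing \eqref{eq:lawtavg_ub} with an explicit constant of one.

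For the lower bound I would invoke \eqref{eq:LBtavg} of Theorem~\ref{th:UBLB} directly. Its first summand is non-negative, because each $F_2(\lambda,x)\le 1$ forces $\frac{1}{\Lambda}\sum_\lambda F_2(\lambda,x)\le 1$ and hence $\sum_{x=0}^{A-1}\frac{1}{\Lambda}\sum_\lambda F_2(\lambda,x)\le A$. Discarding that term leaves $\overline{T}(\mathbf{G}) \ge \frac{\Lambda-t}{1+t}\frac{K_{\mathbf{p}}}{\Lambda}\frac{\Lambda-t-1}{\Lambda-1}$. With $t=\Lambda\gamma$ and $\gamma$ bounded away from $1$, the ratio $\frac{\Lambda-t-1}{\Lambda-1}$ tends to $1-\gamma$, a positive constant, as $\Lambda\to\infty$, so the whole expression is $\Omega\big(\frac{K_{\mathbf{p}}}{\Lambda}\frac{\Lambda-t}{1+t}\big)$, which is \eqref{eq:lawtavg_lb}.

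The routine parts are the hockey-stick identity and the bias–variance decomposition; the only genuine choice-point is how to bound the expected maximum load. The crude $L_2$-norm bound $\max_\lambda|v_\lambda-\mu|\le\big(\sum_\lambda(v_\lambda-\mu)^2\big)^{1/2}$ followed by Jensen is what keeps the argument clean and fully distribution-free, sidestepping any appeal to concentration inequalities or to the explicit Poisson-binomial CDFs $F_1,F_2$ of Theorem~\ref{th:UBLB}. The point to verify is that this bound is not wasteful at the level of scaling: since $\sum_\lambda(\sigma_\lambda^2+(\mu_\lambda-\mu)^2)$ equals the expected total squared deviation $E\big[\sum_\lambda(v_\lambda-\mu)^2\big]$, the resulting $\sqrt{\sum_\lambda(\sigma_\lambda^2+(\mu_\lambda-\mu)^2)}$ term is the correct order for the fluctuation of $\max_\lambda v_\lambda$ about its mean, so no sharper maximal inequality is needed to capture \eqref{eq:lawtavg_ub}.
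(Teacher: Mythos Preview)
Your proof is correct and matches the paper's approach: both bound $\overline{T}(\mathbf{G})$ above and below by constant multiples of $E[l_1]\,\frac{\Lambda-t}{1+t}$ via the hockey-stick identity, and then control $E[l_1]$ through the second-moment/Jensen bound $E[l_1]\le \mu+\sqrt{\sum_\lambda(\sigma_\lambda^2+(\mu_\lambda-\mu)^2)}$ together with the trivial $E[l_1]\ge\mu$. The only cosmetic differences are that the paper cites this maximal-load inequality from~\cite{gascuel_92} rather than deriving it inline as you do, and it extracts the $\Omega$ direction from the first summand of~\eqref{eq:TLB} combined with $E[l_1]\ge\mu$, whereas you keep the second summand of~\eqref{eq:LBtavg} and use $\frac{\Lambda-t-1}{\Lambda-1}\to 1-\gamma$.
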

\begin{proof} 
This proof is deferred to Appendix~\ref{AP:lawtavg}.
\end{proof}



Furthermore we have the following. 
\begin{corollary}~\label{cor:optG}
Any association strategy $\mathbf{G}$ that satisfies $\sqrt{\sum_{i=1}^{\Lambda} (\sigma_i^2+ (\mu_i-\mu)^2) }= O\left(\frac{K_{\mathbf{p}}}{\Lambda}\right)$ is order-optimal.
\end{corollary}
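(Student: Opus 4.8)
The plan is to obtain the corollary as an immediate consequence of the matched scaling laws in Theorem~\ref{th:lawtavg}, combined with the key observation that the lower bound there was proved without any restriction on the association strategy. First I would invoke the upper bound \eqref{eq:lawtavg_ub} and substitute the hypothesis $\sqrt{\sum_{i=1}^{\Lambda}(\sigma_i^2+(\mu_i-\mu)^2)}=O(K_{\mathbf{p}}/\Lambda)$ directly into its bracketed factor. Since that factor is precisely the sum of $K_{\mathbf{p}}/\Lambda$ and the fluctuation term, and the latter is by assumption of order $K_{\mathbf{p}}/\Lambda$, the two contributions collapse into a single $O(K_{\mathbf{p}}/\Lambda)$ term. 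This gives, for the candidate strategy, $\overline{T}(\mathbf{G})=O\big(\tfrac{K_{\mathbf{p}}}{\Lambda}\tfrac{\Lambda-t}{1+t}\big)$.

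Next I would recall that the lower bound \eqref{eq:lawtavg_lb} holds for \emph{every} $\mathbf{G}$, so in particular it applies to the minimizer $\mathbf{G}^{\star}$ of Problem~\ref{eq:opt_P1}. Hence $\min_{\mathbf{G}}\overline{T}(\mathbf{G})=\overline{T}(\mathbf{G}^{\star})=\Omega\big(\tfrac{K_{\mathbf{p}}}{\Lambda}\tfrac{\Lambda-t}{1+t}\big)$, i.e. no association strategy whatsoever can improve this scaling. Combining the two displays, the candidate strategy attains $\overline{T}(\mathbf{G})=\Theta\big(\tfrac{K_{\mathbf{p}}}{\Lambda}\tfrac{\Lambda-t}{1+t}\big)$, which matches the scaling of the optimum, so $\mathbf{G}$ is order-optimal.

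Since the argument is essentially a single substitution, there is no genuine computational obstacle; the one point that deserves care is logical rather than technical. Order-optimality is a statement about the gap to the \emph{minimizer}, so it is essential that the $\Omega$-bound of Theorem~\ref{th:lawtavg} be universal over $\mathbf{G}$ --- it is precisely this universality that lets a per-strategy lower bound double as a lower bound on the optimum. I would also remark that the hypothesis is, in effect, the weakest condition under which the fluctuation term in \eqref{eq:lawtavg_ub} fails to dominate the mean term $K_{\mathbf{p}}/\Lambda$; a strategy whose normalized fluctuation $\sqrt{\sum_{i}(\sigma_i^2+(\mu_i-\mu)^2)}$ grows strictly faster than $K_{\mathbf{p}}/\Lambda$ is not covered by the corollary and may indeed incur a strictly larger scaling, which is consistent with the intuition that balancing the loads $\mu_i$ across cache states is what drives the delay toward its optimal order.
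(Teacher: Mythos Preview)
Your proposal is correct and follows essentially the same approach as the paper: substitute the hypothesis into the upper bound \eqref{eq:lawtavg_ub} to collapse the bracketed factor to $O(K_{\mathbf{p}}/\Lambda)$, and then invoke the universality of the lower bound \eqref{eq:lawtavg_lb} over all $\mathbf{G}$ to conclude that the optimum has the same scaling. The paper presents these two steps in the reverse order and writes the common scale as $\tfrac{K_{\mathbf{p}}(1-\gamma)}{1+t}$ (equivalent to your $\tfrac{K_{\mathbf{p}}}{\Lambda}\tfrac{\Lambda-t}{1+t}$ since $t=\Lambda\gamma$), but the argument is identical.
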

\begin{proof}
Since the lower bound in \eqref{eq:lawtavg_lb} is independent of the association strategy $\mathbf{G}$, this implies that the optimal average delay $\overline{T}^*$ (corresponding to an optimal association $\hat{\mathbf{G}}$) is lower bounded by 
\begin{align} \label{eq:lawtavgopt_lb}
\overline{T}^* = \Omega\left(\frac{K_{\mathbf{p}}(1-\gamma)}{1+t}\right).
 \end{align}
Therefore any association $\mathbf{G}$ for which the gap factor $\sqrt{\sum_{i=1}^{\Lambda} (\sigma_i^2+ (\mu_i-\mu)^2) }$ scales as $O\left(\frac{K_{\mathbf{p}}}{\Lambda}\right)$ would be order optimal as the scaling order of~\eqref{eq:lawtavg_ub} yields $O\left(  \frac{K_{\mathbf{p}}(1-\gamma)}{1+t}\right)$, thus, giving the exact scaling law of $\overline{T}(\mathbf{G}) = \Theta\left(  \frac{K_{\mathbf{p}}(1-\gamma)}{1+t}\right)$.

\end{proof}

Following the insights from Corollary \ref{cor:optG}, we now propose an algorithm that solves Problem~\ref{eq:opt_P1}.
\subsubsection{Algorithm \ref{alg:algo_p}:} The algorithm aims to heuristically minimize $\sum_{i=1}^{\Lambda} (\sigma_i^2+ (\mu_i-\mu)^2)$, and it works in $K$ iterations, where for each iteration the algorithm finds a user and cache state pair $(\hat{k},\hat{\lambda})$ in accordance to step 02 of this algorithm. Consequently user $\hat{k}$ is assigned cache state $\hat{\lambda}$.
\begin{algorithm}
  \caption{}
  \label{alg:algo_p}
  \begin{algorithmic}
 \STATE   \textbf{Input:} $\textbf{p}$, $K$, and $\Lambda$ \\
\STATE \textbf{Output:} $\mathbf{G}$ \\
\STATE \textbf{Initialization:} $\mathbf{G} \leftarrow 0$; $\mathcal{K} \leftarrow [K]$ \\
\STATE Step 01: \textbf{for} $j$ \textbf{from} 1 \textbf{to} $K$ \textbf{do} \\
\STATE Step 02: \ \ $[\hat{\lambda},  \hat{k}]\!\! \leftarrow  \arg\!\!\!\!\!\!\min\limits_{\lambda \in [\Lambda], k \in \mathcal{K}  }\sum_{i=1}^{\Lambda} (\sigma_i^2+ (\mu_i-\mu)^2)$ \\
\STATE Step 03: \ \ $g_{\hat{\lambda},\hat{k}} \leftarrow 1$ \\
\STATE Step 04: \ \ $\mathcal{K} \leftarrow \mathcal{K} \symbol{92} \hat{k}$ \\
\STATE Step 05: \textbf{end for} \\
  \end{algorithmic}
\end{algorithm}

In Section~\ref{sec:validation}, we will verify that the bounds presented in Theorem~\ref{th:UBLB} are valid for any user-to-cache state association strategy. We will also show that Algorithm~\ref{alg:algo_p} provides an efficient user-to-cache state association that yields a performance very close to the performance of optimal user-to-cache state association.


\subsection{Performance analysis with uniform activity level}~\label{sec:results_iden}
In this subsection, we analyze a special setting where users have a uniform activity level $p$, corresponding to the equiprobable case of $p_1= p_2 \dots =p_K= p$. As is common, we will also assume that $I\triangleq\frac{K}{\Lambda}$ is an integer.

\begin{lemma}\label{lem:identical_uniform}
In the presence of uniform activity level probabilities $p$, the optimal user-to-cache state association policy is the uniform one where each cache state is allocated to $K/\Lambda$ users. 
\end{lemma}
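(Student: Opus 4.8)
The plan is to reduce the claim to a majorization statement about the vector of group sizes. Write $n_\lambda \triangleq |\mathbf{G}_\lambda|$ for the size vector $\mathbf{n}=(n_1,\dots,n_\Lambda)$, with $\sum_\lambda n_\lambda = K$. Under a uniform activity level $p$, the load $v_\lambda$ of each state is an independent $\mathrm{Bin}(n_\lambda,p)$ variable, so the joint law of $\mathbf{V}$ — and hence $\overline{T}(\mathbf{G})$ via \eqref{eq:tavg} — depends on $\mathbf{G}$ only through $\mathbf{n}$. I therefore regard the objective as $\overline{T}(\mathbf{n}) = E[\phi(\mathbf{V})]$, where by \eqref{eq:TL} one has $\phi(\mathbf{v}) = \sum_{\lambda=1}^{\Lambda} c_\lambda\, l_\lambda$ with $l=\mathrm{sort}(\mathbf{v})$ in descending order and $c_\lambda = \binom{\Lambda-\lambda}{t}/\binom{\Lambda}{t}$, extended by $c_\lambda=0$ for $\lambda>\Lambda-t$.

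First I would record the structural properties of $\phi$. Since $\binom{\Lambda-\lambda}{t}$ is decreasing in $\lambda$, the weights satisfy $c_1\ge c_2\ge\dots\ge c_\Lambda\ge 0$. Summation by parts gives $\phi(\mathbf{v})=\sum_{m=1}^{\Lambda}(c_m-c_{m+1})\,S_m(\mathbf{v})$ (with $c_{\Lambda+1}=0$), where $S_m(\mathbf{v})$ is the sum of the $m$ largest entries of $\mathbf{v}$. Each $S_m$ is symmetric and convex (a maximum of linear forms over $m$-subsets), and the coefficients $c_m-c_{m+1}$ are nonnegative; hence $\phi$ is symmetric and convex, and therefore Schur-convex.

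The heart of the argument is a balancing inequality: if $n_i\ge n_j+2$, then moving one user from state $i$ to state $j$ does not increase $\overline{T}(\mathbf{n})$. I would prove this by coupling. Fix one user $Y\sim\mathrm{Bern}(p)$ to be moved, and condition on the activity indicators of all other users; writing $A\sim\mathrm{Bin}(n_i-1,p)$ and $B\sim\mathrm{Bin}(n_j,p)$ for the remaining loads of the two states (and freezing the other coordinates), the ``before minus after'' difference equals $p\,E[\,\psi(A+1,B)-\psi(A,B+1)\,]$, where $\psi$ is the symmetric convex restriction of $\phi$ to the two coordinates. Using $\psi(A,B+1)=\psi(B+1,A)$, the quantity $h(a,b):=\psi(a+1,b)-\psi(b+1,a)$ is antisymmetric, so antisymmetrizing yields $E[h(A,B)]=\sum_{a>b}\big(P(a)Q(b)-P(b)Q(a)\big)h(a,b)$, with $P,Q$ the two binomial pmfs. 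For $a>b$ one has $h(a,b)\ge 0$ by Schur-convexity, since $(a+1,b)$ majorizes $(a,b+1)$; and $P(a)Q(b)-P(b)Q(a)\ge 0$ because $\mathrm{Bin}(n_i-1,p)$ dominates $\mathrm{Bin}(n_j,p)$ in the likelihood-ratio order whenever $n_i-1\ge n_j$ (the ratio $\binom{n_i-1}{x}/\binom{n_j}{x}$ is nondecreasing in $x$). Every summand is thus nonnegative, which proves the inequality.

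Finally I would iterate: as long as two group sizes differ by at least $2$, a balancing move strictly decreases $\sum_\lambda n_\lambda^2$ without increasing $\overline{T}$, so the process terminates; since $I=K/\Lambda$ is an integer, it terminates at the uniform vector $(I,\dots,I)$, which is therefore a minimizer. The main obstacle is the balancing step: one cannot simply couple the two group-loads monotonically, as that would alter $E[\phi]$, so the realized load of the larger group may fall below that of the smaller one. The antisymmetrization together with the likelihood-ratio ordering is precisely what resolves this sign issue.
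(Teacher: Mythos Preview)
Your argument is correct and is substantially more complete than the paper's own proof. The paper only records two observations: (i) for each fixed profile $\mathbf{L}$, the delay $T(\mathbf{L})$ in \eqref{eq:TL} is smallest when $\mathbf{L}$ is uniform, and (ii) under equal activity levels, choosing $|G_\lambda|=I$ maximizes $P(\mathbf{L})$ at the uniform profile; it then stops. That is a heuristic rather than a proof, since putting more mass on the pointwise minimizer of $T(\cdot)$ does not by itself force $E[T(\mathbf{L})]$ to be smallest --- a non-uniform $\mathbf{n}$ could in principle redistribute probability among near-uniform profiles in a favorable way. Your route actually closes this gap: the Abel decomposition $\phi=\sum_m(c_m-c_{m+1})S_m$ with $c_m$ decreasing gives symmetric convexity (hence Schur-convexity) of $\phi$, and the balancing step via coupling, antisymmetrization of $h(a,b)=\psi(a+1,b)-\psi(b+1,a)$, and the likelihood-ratio monotonicity of $\mathrm{Bin}(n,p)$ in $n$ is exactly the monotonicity argument the paper omits. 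The only delicate point is the support mismatch between $\mathrm{Bin}(n_i-1,p)$ and $\mathrm{Bin}(n_j,p)$, but the inequality $P(a)Q(b)\ge P(b)Q(a)$ holds trivially whenever one factor vanishes, so your sign analysis goes through. In short, the paper supplies the intuition; your majorization/coupling argument supplies the proof.
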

\begin{proof}
We first note that \eqref{eq:TL} implies that the average delay is minimized when $\mathbf{L}$ is uniform. For the case where we have uniform activity levels, setting $|G_{\lambda}|=I$ for all $\lambda \in [\Lambda]$, results in $P(\mathbf{L})$ being maximized for uniform $\mathbf{L}$. 
\end{proof}

We now proceed to provide computationally efficient analytical bounds on the average delay $\overline{T}(\mathbf{G})$ achieved by the uniform association policy, and subsequently to provide the exact scaling laws of this policy.
\begin{theorem}\label{th:UBLB_iden}
In a state-constrained coded caching network of $\Lambda$ cache states, $K$ cache-aided users with normalized cache capacity $\gamma$ and activity level of $p$, the average delay $\overline{T}(\mathbf{G})$ corresponding to the uniform user-to-cache state association strategy $\mathbf{G}$ is bounded by
\begin{align}\label{eq:UBtavg_iden}
\overline{T}(\mathbf{G}) &\leq    \frac{\Lambda-t}{1+t} E[l_1] \end{align}
and 
 \begin{align} \label{eq:LBtavg_iden}
 \overline{T}(\mathbf{G}) \geq \frac{\Lambda-t}{1+t}  \left( \frac{ E[ l_{1} ] t  }{ \Lambda-1} + \frac{ K p}{\Lambda} \frac{ \Lambda-t-1}{\Lambda-1} \right)
 \end{align}
where 
\begin{align}
E[l_1] &= I- \sum_{j=0}^{I-1}\left(\sum_{i=0}^{j} {I \choose i} p^i \left(1-p\right)^{I-i}\right)^{\Lambda}.
\end{align}

\end{theorem}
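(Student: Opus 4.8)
The plan is to work directly from the averaged-delay expression \eqref{eq:tavg}, which under the uniform association reduces to $\overline{T}(\mathbf{G}) = \sum_{\lambda=1}^{\Lambda-t} E[l_\lambda]\, w_\lambda$ with weights $w_\lambda \triangleq \binom{\Lambda-\lambda}{t}/\binom{\Lambda}{t}$. The starting observation is that, since every cache state is assigned exactly $I=K/\Lambda$ users and each user is independently active with probability $p$, the load $v_\lambda$ of each state is an i.i.d.\ $\mathrm{Binomial}(I,p)$ variable, and $\mathbf{L}$ is the descending sort of these $\Lambda$ i.i.d.\ loads; in particular $l_1 = \max_\lambda v_\lambda$. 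Two facts about the weights are used throughout: by the hockey-stick identity $\sum_{\lambda=1}^{\Lambda-t} w_\lambda = \binom{\Lambda}{t+1}/\binom{\Lambda}{t} = \frac{\Lambda-t}{t+1}$, with $w_1 = \frac{\Lambda-t}{\Lambda}$, and the sequence $w_\lambda$ is non-increasing in $\lambda$ with $w_\Lambda=0$.

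For the closed form of $E[l_1]$ I would use the tail-sum representation of the expectation of the integer-valued maximum, $E[l_1] = \sum_{j=0}^{I-1} P(l_1 > j) = \sum_{j=0}^{I-1}\big(1 - P(\max_\lambda v_\lambda \le j)\big)$; independence gives $P(\max_\lambda v_\lambda \le j) = F_{bin}(I,p,j)^{\Lambda}$, and substituting, together with $\sum_{j=0}^{I-1}1 = I$, yields exactly the stated expression. The upper bound \eqref{eq:UBtavg_iden} is then immediate: since $l_1 \ge l_\lambda$ pointwise we have $E[l_\lambda] \le E[l_1]$ for every $\lambda$, so $\overline{T}(\mathbf{G}) \le E[l_1]\sum_{\lambda=1}^{\Lambda-t} w_\lambda = \frac{\Lambda-t}{1+t} E[l_1]$.

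The lower bound \eqref{eq:LBtavg_iden} is the crux. I would view the vector $(E[l_1],\dots,E[l_\Lambda])$ as a single feasible point of the polytope of non-increasing non-negative sequences with prescribed first coordinate $E[l_1]$ and prescribed total $\sum_\lambda E[l_\lambda] = Kp$ (the total following from $\sum_\lambda E[v_\lambda]=\Lambda I p = Kp$), and minimize $\sum_{\lambda=1}^{\Lambda-t} l_\lambda w_\lambda$ over this polytope. Fixing $l_1=E[l_1]$ removes the constant $w_1 E[l_1]$, and Abel summation applied to the remaining $\sum_{\lambda=2}^{\Lambda} w_\lambda l_\lambda$ (using $w_\Lambda=0$) rewrites it as $\sum_{\lambda=2}^{\Lambda-1}(w_\lambda-w_{\lambda+1}) S_\lambda$ with $S_\lambda=\sum_{j=2}^{\lambda} l_j$ and all coefficients $w_\lambda-w_{\lambda+1}\ge 0$. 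Since the sequence is non-increasing, each partial sum satisfies $S_\lambda \ge \frac{\lambda-1}{\Lambda-1}\big(Kp-E[l_1]\big)$, with simultaneous equality at the flat tail $l_2=\cdots=l_\Lambda=\frac{Kp-E[l_1]}{\Lambda-1}$; hence this configuration minimizes the objective, and the true expected profile can only do worse. Evaluating $\overline{T}$ at the flat tail and simplifying (the algebra collapses through $(\Lambda-1)(t+1)-(\Lambda-t-1)=\Lambda t$) reproduces exactly the right-hand side of \eqref{eq:LBtavg_iden}.

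The main obstacle is precisely this lower-bound minimization: one must argue that although the expected profile is not a free variable, it lies in the stated polytope, and that the flat tail is the minimizer of the weighted sum there. The Abel-summation identity together with the partial-sum inequality $S_\lambda \ge \frac{\lambda-1}{\Lambda-1}R$ is what makes this rigorous, while checking feasibility of the flat tail --- which requires $Kp \le \Lambda E[l_1]$, automatic since a maximum dominates the mean $\frac{Kp}{\Lambda}$ --- closes the argument.
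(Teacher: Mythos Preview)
Your proposal is correct and follows essentially the same route as the paper: the paper also obtains \eqref{eq:UBtavg_iden} by replacing every $E[l_\lambda]$ with $E[l_1]$ and summing via the hockey-stick identity, and obtains \eqref{eq:LBtavg_iden} by keeping $E[l_1]$ in the first slot and flattening the remaining $\Lambda-1$ expected order statistics to $\frac{Kp-E[l_1]}{\Lambda-1}$, invoking in one line that ``uniformity in $\mathbf{L}$ leads to the minimum $\overline{T}(\mathbf{G})$''. Your Abel-summation / partial-sum argument is simply a rigorous unpacking of that one-line claim, and your derivation of $E[l_1]$ via the tail sum and independence coincides with the paper's verbatim.
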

\begin{proof} 
The proof is deferred to Appendix~\ref{AP:UBLB_iden}. 
\end{proof}

Furthermore, the following shows that the bounds remain relatively close to the exact $\overline{T}(\mathbf{G})$. 
\begin{corollary}~\label{c:gapb}
For any fixed $\gamma\leq 1-\frac{1}{\Lambda}$, the multiplicative gap between the analytical upper bound (AUB) in~\eqref{eq:UBtavg_iden} and the analytical lower bound (ALB) in~\eqref{eq:LBtavg_iden}, is at most $\frac{\Lambda-1}{t} < 1/\gamma$. This allows us to identify the exact $\overline{T}(\mathbf{G})$ within a factor that is independent of both $\Lambda$ as well as $K$.
\end{corollary}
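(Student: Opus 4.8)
The plan is to form the ratio of the two bounds directly and exploit the fact that they share the common prefactor $\frac{\Lambda-t}{1+t}$, which cancels and leaves a clean, easily-bounded expression. First I would write the multiplicative gap as
\begin{align}
\frac{\text{AUB}}{\text{ALB}} = \frac{E[l_1]}{\dfrac{E[l_1]\,t}{\Lambda-1} + \dfrac{Kp}{\Lambda}\dfrac{\Lambda-t-1}{\Lambda-1}},
\end{align}
where the factor $\frac{\Lambda-t}{1+t}$ appearing in both~\eqref{eq:UBtavg_iden} and~\eqref{eq:LBtavg_iden} has cancelled. I would note in passing that $E[l_1] > 0$ whenever $p>0$, so the ratio is well defined.

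Next I would observe that the hypothesis $\gamma \leq 1-\frac{1}{\Lambda}$ is precisely the statement $t = \Lambda\gamma \leq \Lambda-1$, which guarantees $\Lambda-t-1 \geq 0$ and hence that the second summand in the denominator, $\frac{Kp}{\Lambda}\frac{\Lambda-t-1}{\Lambda-1}$, is nonnegative. Dropping this nonnegative term lower-bounds the denominator and therefore upper-bounds the ratio:
\begin{align}
\frac{\text{AUB}}{\text{ALB}} \;\leq\; \frac{E[l_1]}{\dfrac{E[l_1]\,t}{\Lambda-1}} \;=\; \frac{\Lambda-1}{t}.
\end{align}

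Finally, substituting $t=\Lambda\gamma$ gives $\frac{\Lambda-1}{t} = \frac{\Lambda-1}{\Lambda\gamma} < \frac{\Lambda}{\Lambda\gamma} = \frac{1}{\gamma}$, which is a constant independent of both $\Lambda$ and $K$, establishing the claim. The proof is essentially algebraic, so I do not anticipate a genuine analytic obstacle; the only step requiring care is verifying the sign of the discarded term, and this is exactly where the assumption $\gamma \leq 1-\frac{1}{\Lambda}$ is used. If one wanted to avoid even this minor bookkeeping, one could alternatively keep the full denominator and note that the extra term only shrinks the gap further, so the bound $\frac{\Lambda-1}{t}$ holds a fortiori.
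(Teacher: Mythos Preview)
Your argument is correct and is essentially identical to the paper's own proof: both drop the nonnegative term $\frac{Kp}{\Lambda}\frac{\Lambda-t-1}{\Lambda-1}$ from the lower bound to obtain $\frac{\Lambda-t}{1+t}\frac{E[l_1]t}{\Lambda-1} \leq \overline{T}(\mathbf{G}) \leq \frac{\Lambda-t}{1+t}E[l_1]$, whence the ratio is $\frac{\Lambda-1}{t} < 1/\gamma$. Your write-up is slightly more explicit in tying the nonnegativity of the discarded term to the hypothesis $\gamma \leq 1-\frac{1}{\Lambda}$, but otherwise the two proofs coincide.
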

\begin{proof} The proof follows directly from the fact that $\frac{\Lambda-t}{1+t} \frac{E[ l_{1} ] t}{\Lambda-1} \leq \overline{T}(\mathbf{G}) \leq  \frac{\Lambda-t}{1+t} E[ l_{1} ]$.
\end{proof}

\begin{remark}
We note that the range of $\gamma\leq 1-\frac{1}{\Lambda}$ covers in essence the entire range of $\gamma$ and most certainly covers the range of pertinent $\gamma$ values.
\end{remark}

We now proceed to exploit the bounds in Theorem \ref{th:UBLB_iden}, in order to provide in a simple and insightful form, the exact scaling laws of performance. The following theorem provides the asymptotic analysis of the average delay $\overline{T}(\mathbf{G})$, in the limit of large $\Lambda$ and $K$.
\begin{theorem}\label{th:lawtavg_iden}
In a coded caching setting with $\Lambda$ cache states and $K$ cache-aided users with equal cache size $\gamma$ and activity level $p$, the average delay $\overline{T}(\mathbf{G})$ corresponding to the uniform association strategy $\mathbf{G}$ scales as
\begin{align}\label{eq:lawtavg_iden}
\overline{T}(\mathbf{G}) =
 \begin{cases} 
\Theta\left(\frac{Kp(1-\gamma)}{1+t}\right) \hspace{0.85cm} \text{ \emph{if} } Ip  = \Omega\left(\log\Lambda \right)\\
\Theta\left(\!\frac{Kp(1-\gamma)\log \Lambda}{(1+t)Ip\log \frac{\log \Lambda}{Ip}}\!\right) \text{ \emph{if} }  Ip\!\in\!\left[\! \Omega\left(\!\frac{1}{\polylog\!\Lambda}\!\right)\!,\!o(\!\log\!\Lambda\!)\!\right]\!.
\end{cases}
\end{align}

\end{theorem}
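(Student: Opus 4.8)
The plan is to reduce the entire statement to the asymptotics of the single quantity $E[l_1]$, and then to recognize $E[l_1]$ as the expected maximum load of a classical balls-into-bins experiment. First I would observe that, under the uniform association with $|\mathbf{G}_\lambda| = I$ for all $\lambda$, the number of active users in cache state $\lambda$ is distributed as $\mathrm{Bin}(I,p)$, these counts are independent across the $\Lambda$ states, and $l_1 = \max_\lambda v_\lambda$. Hence $E[l_1]$ is exactly the expected maximum of $\Lambda$ i.i.d.\ $\mathrm{Bin}(I,p)$ variables, each with mean $\mu = Ip$; indeed the closed form in Theorem~\ref{th:UBLB_iden} is just $E[l_1] = \sum_{j=0}^{I-1}\big(1 - F_{bin}(I,p,j)^\Lambda\big) = \sum_{j\geq 0} P(l_1 > j)$. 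Treating $\gamma$ as a constant, the first term $\frac{E[l_1]t}{\Lambda-1}$ of the lower bound~\eqref{eq:LBtavg_iden} is $\Theta(\gamma E[l_1]) = \Theta(E[l_1])$, while the second term is $O(Ip) = O(E[l_1])$ using the trivial bound $E[l_1] \geq \mu = Ip$; so the bracket is $\Theta(E[l_1])$ and, together with the upper bound~\eqref{eq:UBtavg_iden} (and consistent with the constant gap $1/\gamma$ of Corollary~\ref{c:gapb}), this gives $\overline{T}(\mathbf{G}) = \Theta\!\left(\frac{\Lambda-t}{1+t} E[l_1]\right)$. Since $\frac{\Lambda-t}{1+t} = \frac{\Lambda(1-\gamma)}{1+t}$ and $Kp = \Lambda Ip$, it remains only to pin down the order of $E[l_1]$ in the two regimes.

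In the dense regime $Ip = \Omega(\log\Lambda)$ I would invoke concentration. A Chernoff bound on the upper tail of a single $\mathrm{Bin}(I,p)$ together with a union bound over the $\Lambda$ states gives $P(l_1 \geq \mu + s) \leq \Lambda\, e^{-\Theta(s^2/\mu)}$, so choosing $s = \Theta(\sqrt{\mu\log\Lambda}) = O(\mu)$ yields $l_1 = \mu(1+o(1))$ with high probability; bounding the residual mass by $l_1 \leq I$ shows $E[l_1] = \Theta(\mu) = \Theta(Ip)$. Substituting gives $\overline{T}(\mathbf{G}) = \Theta\!\left(\frac{\Lambda(1-\gamma)}{1+t}Ip\right) = \Theta\!\left(\frac{Kp(1-\gamma)}{1+t}\right)$, the first branch.

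In the sparse regime $Ip = o(\log\Lambda)$ (with $Ip = \Omega(1/\polylog\Lambda)$) the maximum is governed by large deviations rather than by the mean, and this is the step I expect to be the main obstacle. For the upper bound on $E[l_1]$ I would use $P(\mathrm{Bin}(I,p)\geq m) \leq \binom{I}{m}p^m \leq (e\mu/m)^m$ with a union bound, so that $\Lambda(e\mu/m)^m \to 0$ once $m\,\log\!\big(m/(e\mu)\big) \gtrsim \log\Lambda$; solving this (using $Ip = o(\log\Lambda)$, so $\log(\log\Lambda/\mu)\to\infty$) gives $m = O\!\left(\frac{\log\Lambda}{\log(\log\Lambda/\mu)}\right)$. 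For the matching lower bound I would use a reverse Poisson-type estimate $P(\mathrm{Bin}(I,p)\geq m) \geq c\,(\mu/m)^{m}$ and independence across states, so that $P(l_1 < m) = \big(1 - P(\mathrm{Bin}(I,p)\geq m)\big)^{\Lambda} \leq \exp\!\big(-\Lambda P(\mathrm{Bin}(I,p)\geq m)\big)$ stays bounded away from $1$ for $m = \Omega\!\left(\frac{\log\Lambda}{\log(\log\Lambda/\mu)}\right)$; here the constraint $Ip = \Omega(1/\polylog\Lambda)$ is precisely what keeps $\log\Lambda/\mu$ polynomially bounded so that the two thresholds agree up to constants. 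The delicate part is making these upper- and lower-tail estimates for $\mathrm{Bin}(I,p)$ tight enough that the resulting bounds on $m$ coincide in order; this is the standard but careful max-load argument. Concluding, $E[l_1] = \Theta\!\left(\frac{\log\Lambda}{\log(\log\Lambda/(Ip))}\right)$, and substituting into $\Theta\!\left(\frac{\Lambda(1-\gamma)}{1+t}E[l_1]\right)$ while writing $\Lambda = Kp/(Ip)$ yields the second branch $\Theta\!\left(\frac{Kp(1-\gamma)\log\Lambda}{(1+t)Ip\log\frac{\log\Lambda}{Ip}}\right)$.
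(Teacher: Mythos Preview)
Your proposal is correct and follows the same overall architecture as the paper: reduce $\overline{T}(\mathbf{G})$ to $\Theta\!\left(\frac{\Lambda-t}{1+t}\,E[l_1]\right)$ via the bounds of Theorem~\ref{th:UBLB_iden} and the fact that $t/(\Lambda-1)\approx\gamma$ is constant, then determine the order of $E[l_1]$ as the expected maximum of $\Lambda$ i.i.d.\ $\mathrm{Bin}(I,p)$ variables in the two regimes. The difference is purely in the tools used for the max-load step. The paper imports a sharp tail estimate for $P[v_\lambda\geq x]$ from \cite{martin_98} (Lemma~\ref{le:cdf_V}), applies the first/second moment method to the count $Y=\sum_\lambda \mathbf{1}\{v_\lambda\geq k_\alpha\}$, and computes an explicit threshold $k_\alpha$ separately in four sub-regimes of $Ip$ (three of which collapse into your single dense case $Ip=\Omega(\log\Lambda)$). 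Your route is more elementary and more compact: a single Chernoff-plus-union-bound argument handles the entire dense regime at once, and in the sparse regime you replace the cited lemma by the direct estimates $P(\mathrm{Bin}(I,p)\geq m)\leq (e\mu/m)^m$ and a matching Poisson-type lower bound, combined with the exact independence identity $P(l_1<m)=(1-P(\mathrm{Bin}\geq m))^\Lambda$ rather than a second-moment argument. What the paper's approach buys is second-order precision in $k_\alpha$ (the $\sqrt{Ip\log\Lambda}$ and $\log^{(2)}$ corrections), which is not needed for the $\Theta(\cdot)$ statement; what your approach buys is self-containment and the avoidance of the four-way case split.
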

\begin{proof} 
The proof is deferred to Appendix~\ref{AP:lawtavg_iden}.
\end{proof} 



\section{Main Results: Data-Driven Approach}~\label{sec:results_data}
In this section, we will extend our analysis to the data-driven setting. Unlike in the previous section where we used a predetermined set of statistics $\mathbf{p}$, we will now exploit the users' content request histories to define the user activity levels as well as correlations.
To proceed with our analysis we need to define the time scales involved. In our setting, the entire time horizon is equal to the time it takes between two user-to-cache associations. This time horizon will be here subdivided into $S$ independent time slots, where one time slot corresponds to the amount of time that elapses from the appearance of one demand vector to the next demand vector. This dynamic time refinement captures the amount of memory of the system, and will capture how far back in history we can learn from regarding user activities.  

\begin{example}
In a scenario where users are assigned cache states once a week, then the time frame is equal to one week which is equal to 10080 minutes. In this same example, if we assume that independent demand vectors appear once every $10$ minutes, then the number of independent time slots $S$ is simply $S = \frac{10080}{10}= 1008$.
\end{example}

In our setting, users' requests are served simultaneously, starting at the very beginning of each time slot. Any content request received during a time slot is put on hold, to be served in the beginning of the next time slot. This justifies the use of the term \emph{dynamic duration} of each time slot $s\in [S]$, where this duration will be equal to the time needed to transmit all files that were requested during the previous time slot from the BS to the users.
We can now proceed with the details of our data-driven approach. 

Let $\mathbf{D} \in [0,1]^{S\times K}$ denote the \emph{user activity matrix}, of which the $(s,k)$ element $d_{s,k}$ is equal to $1$ if user $k$ requests content at time slot $s$, else $d_{s,k} = 0$. Then, for a given user-to-cache state association $\mathbf{G}$, the cache load vector for time slot $s \in [S]$ is denoted as $\mathbf{V}_s =\left[v_{s,1},  \dots, v_{s,\Lambda} \right]$, where $v_{s,\lambda}= \sum_{k=1}^K g_{\lambda,k}d_{s,k}$ is the number of active users at time slot $s$ that are storing the content of cache state $\lambda \in \left[\Lambda \right]$. The profile vector at time slot $s$ is denoted as $\mathbf{L}_s =\left[l_{s,1},  \dots, l_{s,\Lambda} \right]$, which is the sorted version of the cache load vector $\mathbf{V}_s$ in descending order. The average delay for a given user-to-cache state association $\mathbf{G}$ and a given user activity matrix $\mathbf{D}$, is given by
\begin{align} \label{eq:tavg_data}
  \overline{T}(\mathbf{G}) \triangleq \frac{1}{S}\sum_{s=1}^{S}  \sum_{\lambda=1}^{\Lambda-t}  l_{s,\lambda} \frac{{\Lambda-\lambda \choose t}}{{\Lambda \choose t}}.
\end{align}


Unlike in the statistical approach of Section~\ref{sec:results}, where an enormous number of possible profile vectors rendered the exact calculation of $\overline{T}(\mathbf{G})$ computationally intractable, in this current data-driven setting, the calculation of $\overline{T}(\mathbf{G})$ is direct even for large system parameters. This will allow us to design an algorithm that will find a user-to-cache association policy that is provably order-optimal.


An additional difference of the proposed data-driven problem formulation is that now this formulation inherits a crucial property of exploiting users' activity correlation in time. As previously discussed, users with similar request patterns will be associated with different cache states as this would guarantee more multicasting opportunities during the delivery phase. On the other hand, users that rarely request files at the same time, can be allocated the same cache state without any performance deterioration.


We now proceed to find an order-optimal user-to-cache state association $\hat{\mathbf{G}}$ corresponding to Problem~\ref{eq:opt_P1}. At this point we note that it is computationally intractable to brute-force solve Problem~\ref{eq:opt_P1} for large system parameters $K$, $\Lambda$ and $S$, since there are $\Lambda^{K}$ possible user-to-cache state associations, corresponding to an exhaustive-search computational complexity of $O\left(S\Lambda^{K+1}\right)$. Under these circumstances, the most common approach is to use computationally efficient algorithms to obtain an approximate solution that is away from the optimal solution within provable gaps. In the following subsection, we will present two such computationally efficient algorithms.

\subsection{Computationally efficient algorithms \& bounds on the performance}~\label{sec:algo_data}
We start with the following lemma which lower bounds the optimal average delay $\overline{T}^*$, optimized over all policies $\mathbf{G}$.
\begin{lemma}\label{le:tavg_op_lb}
The optimal average delay, optimized over all association policies, is lower bounded by
\begin{align} \label{eq:tavg_op_lb}
  &\overline{T}^* \geq \frac{1}{S} \sum_{s\in [S]} \left( \left\lfloor\frac{d_s}{\Lambda}\right\rfloor +1 \right)\frac{\Lambda-t}{1+t}  -\frac{1}{S}   \sum_{s\in \mathbf{S}_2}  \frac{ {\Lambda-A_s \choose t+1}}{{\Lambda \choose t}},  
\end{align}
where $d_s=\sum_{k\in [K]}d_{s,k}$, $A_s=d_s- \Lambda \left\lfloor \frac{d_s}{\Lambda}\right\rfloor$, and where $\mathbf{S}_2 \subseteq [S]$ is the set of time slots for which $A_s < \Lambda-t$.  
\end{lemma}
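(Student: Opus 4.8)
The plan is to decouple the minimization across time slots and then solve a single-slot balancing problem exactly. Since \eqref{eq:tavg_data} writes $\overline{T}(\mathbf{G})=\frac1S\sum_{s=1}^S T(\mathbf{L}_s)$ as a sum over slots, and since for any fixed $\mathbf{G}$ each slot contributes $T(\mathbf{L}_s)$, I would first lower bound each term by the smallest value $T$ can take over \emph{all} profiles consistent with slot $s$, not merely those realizable by some $\mathbf{G}$. In slot $s$ the number of active users is fixed at $d_s=\sum_k d_{s,k}$, and any $\mathbf{G}$ distributes these into the $\Lambda$ states, producing $\mathbf{V}_s$ with $\sum_\lambda v_{s,\lambda}=d_s$. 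Enlarging the feasible set to all nonnegative integer vectors of sum $d_s$ can only lower the minimum, so
\[
\overline{T}^*=\min_{\mathbf{G}}\frac1S\sum_s T(\mathbf{L}_s)\ \ge\ \frac1S\sum_s\ \min_{\mathbf{V}:\,\sum_\lambda v_\lambda=d_s}T(\mathrm{sort}(\mathbf{V})),
\]
which reduces the lemma to computing the per-slot minimizer.

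Next I would show that the most balanced distribution minimizes the per-slot delay. Writing $T(\mathbf{L})=\sum_{\lambda=1}^{\Lambda}c_\lambda\,l_\lambda$ with $c_\lambda=\binom{\Lambda-\lambda}{t}/\binom{\Lambda}{t}$ (so $c_\lambda=0$ for $\lambda>\Lambda-t$), the coefficient sequence is nonincreasing, $c_1\ge c_2\ge\cdots\ge c_\Lambda\ge0$, while $\mathbf{L}$ is sorted in descending order; hence $\mathbf{V}\mapsto T(\mathrm{sort}(\mathbf{V}))$ is Schur-convex, and the balanced vector, being majorized by every vector of the same sum, is the minimizer. To keep the argument self-contained I would also give the exchange argument: as long as the load multiset contains a maximum entry $M$ and a minimum entry $m$ with $M\ge m+2$, decrementing a max entry and incrementing a min entry is a Robin-Hood transfer that does not increase $T$ (the objective is evaluated on the re-sorted vector, so reordering is automatic); iterating until all entries differ by at most one yields the balanced profile. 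This identifies the minimizer as the profile with $A_s=d_s-\Lambda\lfloor d_s/\Lambda\rfloor$ states holding $\lfloor d_s/\Lambda\rfloor+1$ active users and the remaining $\Lambda-A_s$ states holding $\lfloor d_s/\Lambda\rfloor$.

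It then remains to evaluate $T$ at this profile in closed form. Substituting $l_\lambda=\lfloor d_s/\Lambda\rfloor+\mathbf{1}[\lambda\le A_s]$ and using the hockey-stick identity $\sum_{\lambda=1}^{m}\binom{\Lambda-\lambda}{t}=\binom{\Lambda}{t+1}-\binom{\Lambda-m}{t+1}$ for $m\le\Lambda-t$, together with $\binom{\Lambda}{t+1}/\binom{\Lambda}{t}=\frac{\Lambda-t}{t+1}$, I would obtain
\[
T_{\min}(s)=\Big(\big\lfloor\tfrac{d_s}{\Lambda}\big\rfloor+1\Big)\frac{\Lambda-t}{1+t}-\frac{\binom{\Lambda-A_s}{t+1}}{\binom{\Lambda}{t}},
\]
where the correction term is automatically zero when $A_s\ge\Lambda-t$ (then $\Lambda-A_s\le t$, so $\binom{\Lambda-A_s}{t+1}=0$), i.e.\ it survives only for $s\in\mathbf{S}_2$. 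Summing over $s$ and dividing by $S$ yields exactly \eqref{eq:tavg_op_lb}.

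The main obstacle is the balancing step: because $T$ acts on the \emph{sorted} load vector, the claim "balanced minimizes $T$" is genuinely a Schur-convexity statement about $\sum_\lambda c_\lambda l_\lambda$ with nonincreasing $c$, and the exchange argument must be phrased via the global max/min entries so that the Robin-Hood transfer is valid without tracking the re-sorting by hand. The remaining delicacy is purely bookkeeping: handling the truncation $c_\lambda=0$ for $\lambda>\Lambda-t$ and the case split at $A_s=\Lambda-t$ so that the single correction term collapses to the sum over $\mathbf{S}_2$.
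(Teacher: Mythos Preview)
Your proposal is correct and follows essentially the same approach as the paper: lower bound each slot by the delay of the most balanced profile with sum $d_s$, then evaluate that minimum in closed form via the hockey-stick identity and split on whether $A_s<\Lambda-t$. Your justification of the balancing step (Schur-convexity/Robin-Hood transfers) and of the per-slot decoupling is in fact more explicit than the paper's, which simply asserts that uniformity minimizes $T$ from the monotonicity of $l_{s,\lambda}$ and $\binom{\Lambda-\lambda}{t}$.
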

\begin{proof}
The proof is deferred to Appendix \ref{AP:tavg_op_lb}.
\end{proof}

The bound provided in Lemma \ref{le:tavg_op_lb} will serve as a benchmark for numerical performance evaluation of various user-to-cache state association algorithms.

We now proceed to present our computationally efficient algorithms. In the following, $\mathcal{G}$ will denote the set form of the user-to-cache state association matrix $\mathbf{G}$, where $(\lambda,k) \in \mathcal{G}$ if $g_{\lambda, k}=1$. Similarly, $\mathcal{G}^{(\lambda)}= \left\{k : (\lambda, k) \in \mathcal{G}\right\}$ will denote the set of users that are storing the content of cache state $\lambda \in [\Lambda]$. Note that there is a direct correspondence between $\mathbf{G}$ and $\mathcal{G}$, and the two terms can be used interchangeably.

\subsubsection{Algorithm \ref{alg:algo_1}:} Problem~\ref{eq:opt_P1} belongs to the family of well-known vector scheduling problems~\cite{chekuri_siam04, im_focs15}, whose aim is to optimally assign each of the  $S$-dimensional $K$ jobs (i.e., the $S$-dimensional $K$ vectors that are drawn from the columns of the user activity matrix $\mathbf{D}$) to one of the machines $\lambda \in [\Lambda]$ (i.e., cache states) with the objective of minimizing the maximum machine load (i.e., $\max_{s\in [S]} l_{s,1}$), or with the objective of minimizing the norm of the machine loads. One can see that the vector scheduling problem is the generalization of a classical load balancing problem, where each job has a vector load instead of a scalar load.


We adopt the vector scheduling algorithm of~\cite[Section II-B3]{im_focs15} to find the optimal user-to-cache state association within provable gaps. \textbf{Algorithm \ref{alg:algo_1}} consists of three parts. The first part is the data transformation, where the user activity matrix $\mathbf{D}$ is scaled according to step 00.  The second part (steps 01 to 08) is the deterministic user-to-cache state association, where for each user $k\in[K]$, we find the cache state $\hat{\lambda}\in [\Lambda]$ according to step 02. If the scaled load (cf. step 03) of cache $\hat{\lambda}$ after the assignment of user $k$ is less than $\frac{30\log S}{\log \log S}+1$ for all time slots $s\in [S]$, then user $k$ is assigned to cache state $\hat{\lambda}$. Otherwise user $k$ is not assigned to any of the cache states, and is instead added to a set of residual users denoted by $\mathcal{K}_r$, and will be associated to a cache state later in the third part of \textbf{Algorithm \ref{alg:algo_1}}. The outcome of the second part is the user-to-cache state association $\mathcal{G}_1$ for users in $[K]/\mathcal{K}_r$. Next, the third part (steps 09 to 13) completes the association of the residual users in $\mathcal{K}_r$. Each user $k\in \mathcal{K}_r$ is assigned to cache $\hat{\lambda}\in [\Lambda]$ according to step 11. The outcome of this part is the user-to-cache state association $\mathcal{G}_2$ for users in $\mathcal{K}_r$. The final user-to-cache state association strategy for all users is then given by $\mathcal{G} = \mathcal{G}_1\cup \mathcal{G}_2$. 

\begin{algorithm}
  \caption{\ }  \label{alg:algo_1}
  \begin{algorithmic}
 \STATE   \textbf{Input:} $\mathbf{D}$, $K$, $\Lambda$, and $S$ \\
\STATE \textbf{Output:} $\mathcal{G}$ \\
\STATE \textbf{Initialization:} $\mathcal{G}_1 \leftarrow \emptyset$; $\mathcal{G}_2 \leftarrow \emptyset$;  $\mathcal{K}_r \leftarrow \emptyset$; $\alpha=\frac{10\log S}{\log \log S}$ \\
\STATE Step 00: $\bar{d}_{s,k}  \leftarrow \min\left(\frac{\Lambda \ d_{s,k}}{\sum_{i\in[K]} d_{s,i}},1\right)$ \ \ $\forall \ s\in [S], k\in[K]$  
\STATE Step 01: \textbf{for} $k$ \textbf{from} 1 \textbf{to} $K$ \textbf{do} \\
\STATE Step 02: \ \ $\hat{\lambda}\! \leftarrow  \arg\!\!\min\limits_{\lambda \in [\Lambda]} \sum\limits_{s=1}^{S}\sum\limits_{\lambda=1}^{\Lambda} \left(\frac{1}{\alpha}\right)^{ \frac{\alpha}{\Lambda}\!\!\!\!\!\!\sum\limits_{i \in (\mathcal{G}_1\cup (\lambda, k))} \!\!\!\!\!\!\!\!\bar{d}_{s,i} -\!\!\!\!\sum \limits_{j \in (\mathcal{G}_1\cup (\lambda, k))^{\lambda}} \!\!\!\!\!\!\!\!\! \bar{d}_{s,j}}$ \\
\STATE Step 03: \ \ \textbf{if} $\sum\limits_{k \in (\mathcal{G}_1\cup (\hat{\lambda}, k))^{\hat{\lambda}}} \!\!\!\!\!\!\!\!\! \bar{d}_{s,k} < 3\alpha+1$ $\forall$ $s\in[S]$ \\
\STATE Step 04: \ \ \ \ $\mathcal{G}_1 \leftarrow \mathcal{G}_1\cup (\hat{\lambda},  k)$ \\
\STATE Step 05: \ \ \textbf{else}  \\
\STATE Step 06: \ \ \ \ $\mathcal{K}_r \leftarrow \mathcal{K}_r \cup k$ \\
\STATE Step 07: \ \ \textbf{end if} \\
\STATE Step 08: \textbf{end for} \\
\STATE Step 09: \textbf{for} $c$ \textbf{from} 1 \textbf{to} $\left|\mathcal{K}_r\right|$ \textbf{do} \\
\STATE Step 10: \ \ $k =\mathcal{K}_r(c)$ \\
\STATE Step 11: \ \ $\hat{\lambda}\! \leftarrow  \arg\!\!\min\limits_{\lambda \in [\Lambda]} \left( \max\limits_{s\in[S]}  \sum \limits_{j \in (\mathcal{G}_2\cup (\lambda, k))^{\lambda}} \!\!\!\!\!\!\!\!\! \bar{d}_{s,j} \right)$ \\
\STATE Step 12: \ \ $\mathcal{G}_2 \leftarrow \mathcal{G}_2\cup (\hat{\lambda},  k)$ \\
\STATE Step 13: \textbf{end for} \\
\STATE Step 14: $\mathcal{G} \leftarrow \mathcal{G}_1\cup \mathcal{G}_2$ \\
  \end{algorithmic}
\end{algorithm}

 \begin{theorem}\label{th:tavg_alg1_ub}
When there are at least $\Lambda$ requests at each time slot $s\in [S]$, the average delay $\overline{T}(\mathbf{G})$ corresponding to the user-to-cache state association $\mathbf{G}$ obtained from \textbf{Algorithm \ref{alg:algo_1}} is bounded by
\begin{align} \label{eq:tavg_alg1_ub}
  \overline{T}(\mathbf{G}) = O\left(\frac{\log S}{\log \log S}\overline{T}^*\right),
\end{align}
which proves that \textbf{Algorithm \ref{alg:algo_1}} is at most a factor $O\left(\frac{\log S}{\log \log S}\right)$ from the optimal.
\end{theorem}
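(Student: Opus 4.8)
The plan is to recognize Problem~\ref{eq:opt_P1} in the data-driven setting as an instance of vector scheduling under the makespan ($\ell_\infty$) objective, and to show that (i) the delay $\overline{T}(\mathbf{G})$ is controlled, up to the common factor $\frac{\Lambda-t}{1+t}$, by the per-slot maximum loads $l_{s,1}$; (ii) the scaled makespan produced by Algorithm~\ref{alg:algo_1} is $O(\alpha)$ with $\alpha=\frac{10\log S}{\log\log S}$; and (iii) the optimal delay $\overline{T}^*$ is bounded below by the same average-load quantity, so that the two quantities cancel and leave the factor $O(\alpha)$.

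First I would reduce the delay to a makespan quantity. Using $l_{s,\lambda}\le l_{s,1}$ together with the hockey-stick identity $\sum_{\lambda=1}^{\Lambda-t}\binom{\Lambda-\lambda}{t}=\binom{\Lambda}{t+1}$, the inner sum in~\eqref{eq:tavg_data} obeys $\sum_{\lambda=1}^{\Lambda-t} l_{s,\lambda}\frac{\binom{\Lambda-\lambda}{t}}{\binom{\Lambda}{t}}\le l_{s,1}\frac{\binom{\Lambda}{t+1}}{\binom{\Lambda}{t}}=l_{s,1}\frac{\Lambda-t}{1+t}$, whence $\overline{T}(\mathbf{G})\le \frac{\Lambda-t}{1+t}\cdot\frac1S\sum_{s}l_{s,1}$. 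Under the hypothesis $d_s\ge\Lambda$ the clipping in step~00 is inactive, so $\bar d_{s,k}=\frac{\Lambda}{d_s}d_{s,k}$ and $\sum_k\bar d_{s,k}=\Lambda$; consequently the true per-slot maximum load satisfies $l_{s,1}=\frac{d_s}{\Lambda}\max_\lambda\sum_{k\in\mathcal{G}^{(\lambda)}}\bar d_{s,k}$, i.e.\ it equals $\frac{d_s}{\Lambda}$ times the scaled makespan at slot $s$. Hence it remains to bound the scaled makespan of the assignment returned by Algorithm~\ref{alg:algo_1} by $O(\alpha)$, uniformly over $s$.

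The crux of the argument is this scaled-makespan guarantee, and it is where I would invoke and adapt the potential-function analysis of the vector scheduling algorithm in~\cite[Section II-B3]{im_focs15}. After the normalization of step~00, every job has coordinates $\bar d_{s,k}\le 1$ and every coordinate has average machine load $\frac1\Lambda\sum_k\bar d_{s,k}=1$, so the natural lower bound on the scaled makespan is $\Theta(1)$. The exponential potential minimized in step~02, namely $\sum_{s,\lambda}(1/\alpha)^{\frac{\alpha}{\Lambda}\sum_i\bar d_{s,i}-\sum_{j\in\mathcal{G}^{(\lambda)}}\bar d_{s,j}}$, is shown to stay bounded as jobs are greedily assigned, which certifies that the phase-two assignment keeps every machine-coordinate load below the threshold $3\alpha+1$ for all but a controllably small set of jobs; those jobs are deferred to $\mathcal{K}_r$. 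The main obstacle is then to show that the residual phase (steps~09--13), which assigns the deferred jobs by greedily minimizing the per-coordinate maximum load, adds only $O(\alpha)$ to each coordinate: this follows because the boundedness of the potential limits the number of near-saturated machine-coordinates, hence the number of deferred jobs touching any coordinate, so that greedy placement spreads them with $O(\alpha)$ overhead. Combining the two phases yields a scaled makespan of $O(\alpha)$ at every slot, and therefore $\overline{T}(\mathbf{G})=O\!\big(\alpha\,\frac{\Lambda-t}{1+t}\,\frac1S\sum_s\frac{d_s}{\Lambda}\big)$.

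Finally I would match this against a lower bound on $\overline{T}^*$ drawn from Lemma~\ref{le:tavg_op_lb}. Bounding the correction term by $\frac{\binom{\Lambda-A_s}{t+1}}{\binom{\Lambda}{t}}\le\frac{\Lambda-t}{1+t}$ and using $d_s\ge\Lambda$ (so that $\lfloor d_s/\Lambda\rfloor+1\ge 2$ and $\lfloor d_s/\Lambda\rfloor+1\ge d_s/\Lambda$), the lemma gives $\overline{T}^*\ge\frac{\Lambda-t}{1+t}\big(\frac1S\sum_s(\lfloor d_s/\Lambda\rfloor+1)-1\big)\ge\frac12\frac{\Lambda-t}{1+t}\frac1S\sum_s\frac{d_s}{\Lambda}$, i.e.\ $\overline{T}^*=\Omega\!\big(\frac{\Lambda-t}{1+t}\frac1S\sum_s\frac{d_s}{\Lambda}\big)$. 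Dividing the upper bound on $\overline{T}(\mathbf{G})$ by this lower bound, the common factor $\frac{\Lambda-t}{1+t}\frac1S\sum_s\frac{d_s}{\Lambda}$ cancels and leaves the ratio $O(\alpha)=O(\log S/\log\log S)$, which is exactly~\eqref{eq:tavg_alg1_ub}. I expect the delay-to-makespan reduction and the lower bound to be routine; the genuine difficulty, as noted, is the residual-phase analysis establishing the $O(\alpha)$ scaled makespan.
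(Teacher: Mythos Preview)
Your proposal is correct and follows essentially the same route as the paper: reduce $\overline{T}(\mathbf{G})$ to the per-slot makespan via the hockey-stick identity, use the hypothesis $d_s\ge\Lambda$ to undo the clipping in step~00 so that the actual load is $\frac{d_s}{\Lambda}$ times the scaled load, invoke the vector-scheduling analysis of~\cite{im_focs15} to bound the scaled load by $O(\alpha)$, and compare with the $\Omega\big(\frac{\Lambda-t}{1+t}\frac{1}{S}\sum_s d_s/\Lambda\big)$ lower bound on $\overline{T}^*$ coming from Lemma~\ref{le:tavg_op_lb}. The only noteworthy difference is that the paper cites specific results (Lemmas~15 and~18 of~\cite{im_focs15}) to obtain an $O(1)$ bound on the residual-phase scaled load $v^{\mathcal{G}_2}_{s,\lambda}$, whereas you argue qualitatively for an $O(\alpha)$ bound; either suffices, but you may as well cite those lemmas directly rather than re-derive the potential argument.
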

\begin{proof}
The proof is deferred to Appendix \ref{AP:tavg_alg1_ub}.
\end{proof}

\begin{proposition}\label{prop:algo1}
The time complexity of \textbf{Algorithm \ref{alg:algo_1}} is $O(\Lambda^2KS)$.
\end{proposition}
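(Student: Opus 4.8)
The plan is to bound the running time of each of the three parts of Algorithm~\ref{alg:algo_1} separately and then verify that the second part dominates. First I would dispose of the data-transformation step (Step 00): computing the column sum $\sum_{i\in[K]} d_{s,i}$ for a single time slot $s$ costs $O(K)$, so forming all $S$ column sums and then every scaled entry $\bar{d}_{s,k}$ costs $O(KS)$ in total.

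The core of the argument is the first association loop (Steps 01--08), which executes $K$ times. The key bookkeeping observation is that, as the algorithm proceeds, one maintains for every cache state $\lambda'\in[\Lambda]$ and every time slot $s\in[S]$ its running scaled load $\sum_{j\in\mathcal{G}_1^{(\lambda')}}\bar d_{s,j}$, together with the aggregate term $\frac{\alpha}{\Lambda}\sum_{i\in\mathcal{G}_1}\bar d_{s,i}$; this occupies an $O(S\Lambda)$ table. For a fixed candidate cache state $\lambda$ in Step 02, evaluating the potential $\sum_{s=1}^{S}\sum_{\lambda'=1}^{\Lambda}(1/\alpha)^{(\cdot)}$ then amounts to reading off these stored loads, inserting the $O(1)$ correction $\bar d_{s,k}$ in the single coordinate $\lambda'=\lambda$, and summing the $S\Lambda$ exponential terms, for a cost of $O(S\Lambda)$. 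Since the $\arg\min$ in Step 02 ranges over all $\Lambda$ candidate cache states, one full execution of Step 02 costs $O(S\Lambda^2)$. The feasibility test of Step 03 inspects the single cache $\hat\lambda$ across all $S$ slots at cost $O(S)$, and updating the load table after an assignment (Step 04) costs $O(S)$; both are dominated. Multiplying the per-iteration cost $O(S\Lambda^2)$ by the $K$ iterations yields $O(\Lambda^2 K S)$ for the first loop.

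It remains to check that the residual loop (Steps 09--13) does not exceed this. This loop runs at most $|\mathcal K_r|\le K$ times, and in Step 11 it again ranges over the $\Lambda$ candidate caches; for each candidate, computing $\max_{s\in[S]}$ of the updated load of that cache costs $O(S)$ using the maintained table, so one execution of Step 11 costs $O(\Lambda S)$ and the whole loop costs $O(\Lambda K S)$, which is dominated. Adding the three contributions $O(KS)+O(\Lambda^2 K S)+O(\Lambda K S)$ gives the claimed $O(\Lambda^2 K S)$.

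The main obstacle is purely the careful accounting of Step 02: one must argue that evaluating the potential for a \emph{single} candidate cache costs only $O(S\Lambda)$, which hinges on precommitting to the $O(S\Lambda)$ load table so that each of the $S\Lambda$ summands is computable in $O(1)$, and on noting that switching the candidate cache perturbs only one coordinate of that table (indeed the aggregate term is independent of which cache receives user $k$). With this bookkeeping fixed, the remaining steps are routine, and the factor $\Lambda^2$ is seen to arise precisely from the product of the $\Lambda$ candidates in the $\arg\min$ with the $\Lambda$-fold inner sum over cache states in the potential.
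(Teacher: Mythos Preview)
Your argument is correct and follows essentially the same decomposition as the paper: Step~02 costs $O(\Lambda^2 S)$ per iteration over $K$ iterations, and Step~11 costs $O(\Lambda S)$ per iteration over at most $K$ iterations, giving $O(\Lambda^2 K S)$ overall. The paper's proof simply asserts the $\Lambda^2 S$ cost of Step~02 without justification, whereas you supply the bookkeeping argument (the maintained $O(S\Lambda)$ load table and the observation that a candidate switch perturbs only one coordinate) that actually substantiates it; this extra care is welcome but does not change the route.
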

\begin{proof}
The first part of \textbf{Algorithm \ref{alg:algo_1}} runs for $K$ iterations and in each iteration, the evaluation at step 02 takes at most $\Lambda^2S$ basic operations. Then, the second part of \textbf{Algorithm \ref{alg:algo_1}} runs for at most $K$ iterations and in each iteration, the evaluation at step 11 takes at most $\Lambda S$ basic operations. Thus the time complexity of \textbf{Algorithm \ref{alg:algo_1}} is $O(\Lambda^2KS)$. 
\end{proof}
Directly from above, we can see that \textbf{Algorithm \ref{alg:algo_1}} is significantly faster than the exhaustive search algorithm for which as we recall the time complexity  was $O\left(S\Lambda^{K+1}\right)$. 


\subsubsection{Algorithm \ref{alg:algo_2}:}
The main intuition behind \textbf{Algorithm \ref{alg:algo_2}} is to exploit the fact that both ${\Lambda-\lambda \choose t}$ and $l_{s,\lambda}$ are non-increasing with $\lambda$; a fact that directly follows from~\eqref{eq:tavg_data}. Thus, the optimal user-to-cache state association strategy is the one that minimizes the variances of the cache load vectors $\mathbf{V}_s$ over all time slots. \textbf{Algorithm \ref{alg:algo_2}} aims to heuristically minimize the sum of squares of cache populations over all time slots, which is equivalent to minimizing the sum of variances of the cache load vectors over all time slots. \textbf{Algorithm \ref{alg:algo_2}} works in $K$ iterations. At each iteration, it finds a pair of a user $\hat{k}$ and a cache state $\hat{\lambda}$ according to step 02 of \textbf{Algorithm \ref{alg:algo_2}} and assigns user $\hat{k}$ to cache state $\hat{\lambda}$.

\begin{algorithm}
  \caption{}
  \label{alg:algo_2}
  \begin{algorithmic}
 \STATE   \textbf{Input:} $\mathbf{D}$, $K$, and $\Lambda$ \\
\STATE \textbf{Output:} $\mathcal{G}$ \\
\STATE \textbf{Initialization:} $\mathcal{G} \leftarrow \emptyset$;  $\mathcal{K} \leftarrow [K]$ \\
\STATE Step 01: \textbf{for} $i$ \textbf{from} 1 \textbf{to} $K$ \textbf{do} \\
\STATE Step 02: \ \ $[\hat{\lambda},  \hat{k}]\!\! \leftarrow  \arg\!\!\!\!\!\!\min\limits_{\lambda \in [\Lambda], k \in \mathcal{K}  } \sum\limits_{s\in [S]}\sum\limits_{i\in [\Lambda]} \left(\sum\limits_{j\in (\mathcal{G}\cup (\lambda, k))^{(i)} }\!\!\!\!\!\! d_{s,j}\right)^2$ \\
\STATE Step 03: \ \ $\mathcal{G} \leftarrow \mathcal{G}\cup (\hat{\lambda},  \hat{k})$ \\
\STATE Step 04: \ \ $\mathcal{K} \leftarrow \mathcal{K} \symbol{92} \hat{k}$ \\
\STATE Step 05: \textbf{end for} \\
  \end{algorithmic}
\end{algorithm}

\begin{proposition}\label{prop:algo2}
The time complexity of \textbf{Algorithm \ref{alg:algo_2}} is $O(\Lambda^2K^2S)$.
\end{proposition}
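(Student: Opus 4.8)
The plan is to follow the same accounting used for Proposition~\ref{prop:algo1}, namely to count the number of outer iterations and multiply by the per-iteration cost of the dominant Step 02, checking along the way that Steps 03 and 04 are negligible. The outer loop at Step 01 runs exactly $K$ times, since one user is removed from $\mathcal{K}$ at Step 04 of every iteration. Steps 03 and 04 merely commit the chosen pair and shrink $\mathcal{K}$, costing $O(S)$ and $O(1)$ respectively, so the entire complexity is governed by the argmin evaluation at Step 02.

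Next I would bound the cost of a single Step 02. The argmin ranges over all pairs $(\lambda, k)$ with $\lambda \in [\Lambda]$ and $k \in \mathcal{K}$, of which there are at most $\Lambda |\mathcal{K}| \le \Lambda K$. For each such candidate pair the objective is the double sum $\sum_{s\in[S]}\sum_{i\in[\Lambda]} \big(\sum_{j \in (\mathcal{G}\cup(\lambda,k))^{(i)}} d_{s,j}\big)^2$, which contains $S\Lambda$ squared-load terms. The key observation is that the inner user-sum is simply the load of cache state $i$ at time slot $s$, so if these loads are maintained across iterations, each of the $S\Lambda$ terms is obtained in $O(1)$ --- a tentative assignment $(\lambda, k)$ only perturbs the load of cache state $\lambda$, by $d_{s,k}$ at slot $s$, which is an $O(S)$ update. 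Hence each candidate pair is evaluated in $O(S\Lambda)$ operations, and Step 02 costs at most $O(\Lambda K \cdot S\Lambda) = O(\Lambda^2 K S)$.

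Combining the two gives a total of $K \cdot O(\Lambda^2 K S) = O(\Lambda^2 K^2 S)$, as claimed. The only subtle point --- and the step I expect to require the most care --- is the $O(1)$-per-term claim: one must argue that the inner sums $\sum_{j} d_{s,j}$ need not be recomputed from scratch over the growing user sets $(\mathcal{G}\cup(\lambda,k))^{(i)}$, but can instead be read off from incrementally maintained cache loads. I would note that a tighter incremental evaluation of the \emph{change} in the objective (which only involves cache state $\lambda$) would shave a factor of $\Lambda$, but the stated $O(\Lambda^2 K^2 S)$ already follows from the straightforward term-by-term count above and suffices for the comparison against the exhaustive-search complexity $O(S\Lambda^{K+1})$.
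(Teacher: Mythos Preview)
Your proposal is correct and follows essentially the same accounting as the paper: $K$ outer iterations, with Step~02 dominating at $O(K\Lambda^2 S)$ per iteration, for a total of $O(\Lambda^2 K^2 S)$. The paper's own proof simply asserts the per-iteration cost of Step~02 without justification, whereas you supply the missing decomposition (at most $\Lambda K$ candidate pairs, each with an $S\Lambda$-term objective evaluable in $O(1)$ per term via maintained cache loads); this is a strict elaboration of the paper's argument rather than a different route.
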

\begin{proof}
\textbf{Algorithm \ref{alg:algo_2}} runs for $K$ iterations and in each iteration, the evaluation at step 02 takes at most $K\Lambda^2S$ basic operations. Thus the time complexity of \textbf{Algorithm \ref{alg:algo_2}} is $O(\Lambda^2K^2S)$. 
\end{proof}
We can see that the time complexity of \textbf{Algorithm \ref{alg:algo_2}} is $K$ times higher than the time complexity of \textbf{Algorithm \ref{alg:algo_1}}. However, in Section~\ref{sec:validation} we numerically show that \textbf{Algorithm \ref{alg:algo_2}} performs better than \textbf{Algorithm \ref{alg:algo_1}}.



\section{Numerical Validation}\label{sec:validation}
In this section, we numerically validate our analytical bounds, and evaluate the performance of the different proposed user-to-cache state association algorithms. 

\subsection{Statistical approach}
We first evaluate our proposed analytical bounds in Theorem~\ref{th:UBLB} and Theorem~\ref{th:UBLB_iden} for the statistical setting using the \emph{sampling-based numerical} (SBN) approximation method, where for any given $\mathbf{G}$, we generate a sufficiently large set $\mathcal{L}_1$ of randomly generated profile vectors $\mathbf{L}$ based on user activity vector $\mathbf{p}$ and where we subsequently approximate $ \overline{T}(\mathbf{G})$ as
\begin{align}\label{eq:tavgs} 
 \overline{T}(\mathbf{G})\approx \frac{1}{|\mathcal{L}_1|}\sum_{\mathbf{L}\in \mathcal{L}_1} T(\mathbf{L}),
 \end{align}
where $T(\mathbf{L})$ is defined in~\eqref{eq:TL}. 
For our evaluations involving an arbitrary user activity level vector $\textbf{p}$, we adopt the Pareto principle to generate the synthetic user activity level vector $\textbf{p}$. According to the Pareto principle, $80\%$ of consequences (content requests) come from $20\%$ of causes (users). To be exact, each user $k \in [K]$ has a request with probability
\begin{align}\label{eq:pkdata}
p_k =
    \begin{cases}
     \frac{1}{\sum_{i=1}^5 i^{-2.7}} \text{ \emph{if}} \ \ k  = [1, 2, \cdots, 0.2K] \\ 
     \frac{2^{-2.7}}{\sum_{i=1}^5 i^{-2.7}} \text{ \emph{if}} \ \ k  = [0.2K+1, 0.2K+2, \cdots, 0.4K] \\ 
     \frac{3^{-2.7}}{\sum_{i=1}^5 i^{-2.7}} \text{ \emph{if}} \ \ k  = [0.4K+1, 0.4K+2, \cdots, 0.6K] \\ 
     \frac{4^{-2.7}}{\sum_{i=1}^5 i^{-2.7}} \text{ \emph{if}} \ \ k  = [0.6K+1, 0.6K+2, \cdots, 0.8K] \\ 
     \frac{5^{-2.7}}{\sum_{i=1}^5 i^{-2.7}} \text{ \emph{if}} \ \ k  = [0.8K+1, 0.8K+2, \cdots, K]. \\ 
    \end{cases}
\end{align}
The intuition behind~\eqref{eq:pkdata} is that users are divided into $5$ equipopulated groups, and the users that belong to the same group have the same activity levels. 
The activity levels corresponding to these $5$ groups then follow the Power law with parameter $\alpha = 2.7$, and with these carefully selected parameters, the user activity pattern satisfies the Pareto principle (80/20 rule)~\cite{newman2005power}.


In Figure~\ref{fig:aub_vs_alb_arb}, we compare the analytical bounds in~\eqref{eq:UBtavg} and~\eqref{eq:LBtavg} for an arbitrary activity level vector $\mathbf{p}$, where this comparison uses the sampling-based numerical (SBN) approximation which is done for $|\mathcal{L}_1|=20000$ and random user-to-cache state association. Subsequently, Figure~\ref{fig:aub_vs_alb_iden} compares the analytical bounds in~\eqref{eq:UBtavg_iden} and~\eqref{eq:LBtavg_iden} for uniform user activity level, where again the comparison is with sampling-based numerical (SBN) approximation which is done for $|\mathcal{L}_1|=20000$ and uniform user-to-cache state association. Both figures reveal the proposed analytical bounds to be very tight, where in particular, analytical upper bounds are indeed very close to the exact performance.
\begin{figure}[t]
\centering
\includegraphics[ width=.9\linewidth]{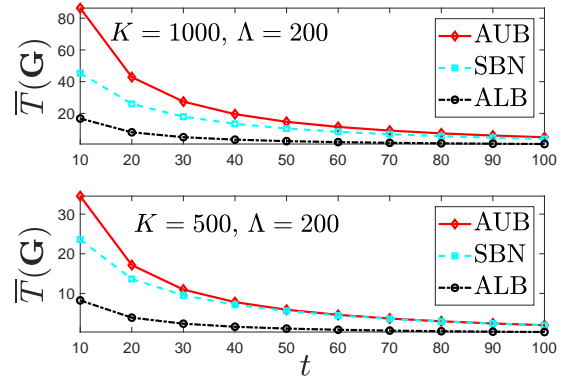} 
\caption{Analytical upper bound (AUB) from \eqref{eq:UBtavg} vs. analytical lower bound (ALB) from \eqref{eq:LBtavg} vs. sampling-based numerical (SBN) approximation in~\eqref{eq:tavgs} (for $|\mathcal{L}_1|=20000$, $\mathbf{p}$ in \eqref{eq:pkdata}, and random user-to-cache state association).}
\label{fig:aub_vs_alb_arb}
\end{figure}

\begin{figure}[t]
\centering
\includegraphics[ width=.9\linewidth]{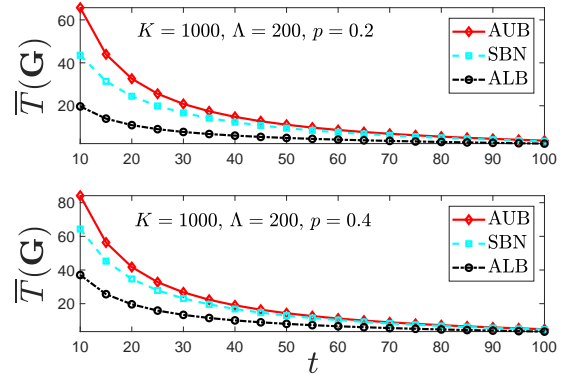} 
\caption{Analytical upper bound (AUB) from \eqref{eq:UBtavg_iden} vs. analytical lower bound (ALB) from \eqref{eq:LBtavg_iden} vs. sampling-based numerical (SBN) approximation in~\eqref{eq:tavgs} (for $|\mathcal{L}_1|=20000$ and uniform user-to-cache state association).}
\label{fig:aub_vs_alb_iden}
\end{figure}
\begin{figure}[t]
\centering
\includegraphics[width=.9\linewidth]{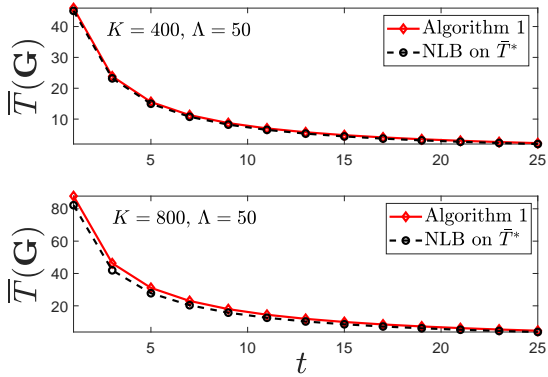} 
\caption{SBN from~\eqref{eq:tavgs} of \textbf{Algorithm \ref{alg:algo_p}}  vs. Numerical lower bound (NLB) on $\overline{T}^*$ from~\eqref{eq:tavg_op_lb} (for $|\mathcal{L}_1|=20000$ and $\mathbf{p}$ in \eqref{eq:pkdata} ).}
\label{fig:algo1_vs_lb}
\end{figure}

Next, we evaluate the performance of our first proposed user-to-cache state association algorithm (\textbf{Algorithm \ref{alg:algo_p}}) by comparing it with the numerical lower bound (NLB) on the delay $\overline{T}^*$ corresponding to the optimal user-to-cache state association $\hat{\mathbf{G}}$ of Lemma~\ref{le:tavg_op_lb}. Figure~\ref{fig:algo1_vs_lb} compares SBN approximation (once again done for $\vert \mathcal{L}_1 \vert = 20000$) for the user-to-cache state association obtained from \textbf{Algorithm \ref{alg:algo_p}} with the numerical lower bound (NLB) on $\overline{T}^*$ in~\eqref{eq:tavg_op_lb}. Again we observe that the performance corresponding to the user-to-cache state association $\mathbf{G}$ obtained from \textbf{Algorithm \ref{alg:algo_p}} is very close to NLB for $\overline{T}^*$.

\subsection{Data-driven approach}
For the data-driven approach, we synthetically generate a user activity matrix $\mathbf{D}$ following the Pareto principle. To be exact, we assume that user $k \in [K]$ develops a request (i.e., is active) with probability $p_k$ as in~\eqref{eq:pkdata} at each time slot $s \in [S]$. Then, for each time slot $s \in [S]$, we pick a random number $r_k$ between $0$ and $1$ for each user $k\in[K]$, and set $d_{s,k} = 1$ if $r_k\leq p_k$, and $d_{s,k} = 0$ if $r_k > p_k$, which yields a user activity matrix $\mathbf{D}$ satisfying the Pareto principle~\cite{newman2005power}. 

In Figure~\ref{fig:algo2_vs_algo3}, we compare the average delay $\overline{T}(\mathbf{G})$ in~\eqref{eq:tavg_data} corresponding to the user-to-cache state association obtained from \textbf{Algorithm \ref{alg:algo_1}} and \textbf{Algorithm \ref{alg:algo_2}} with the lower bound (LB) on $\overline{T}^*$ in~\eqref{eq:tavg_op_lb}. It turns out that both algorithms yield performances that are over close to the optimal LB on $\overline{T}^*$, with \textbf{Algorithm \ref{alg:algo_2}} having a slight advantage over \textbf{Algorithm \ref{alg:algo_1}}.

\begin{figure}[t]
\centering
\includegraphics[ width=.9\linewidth]{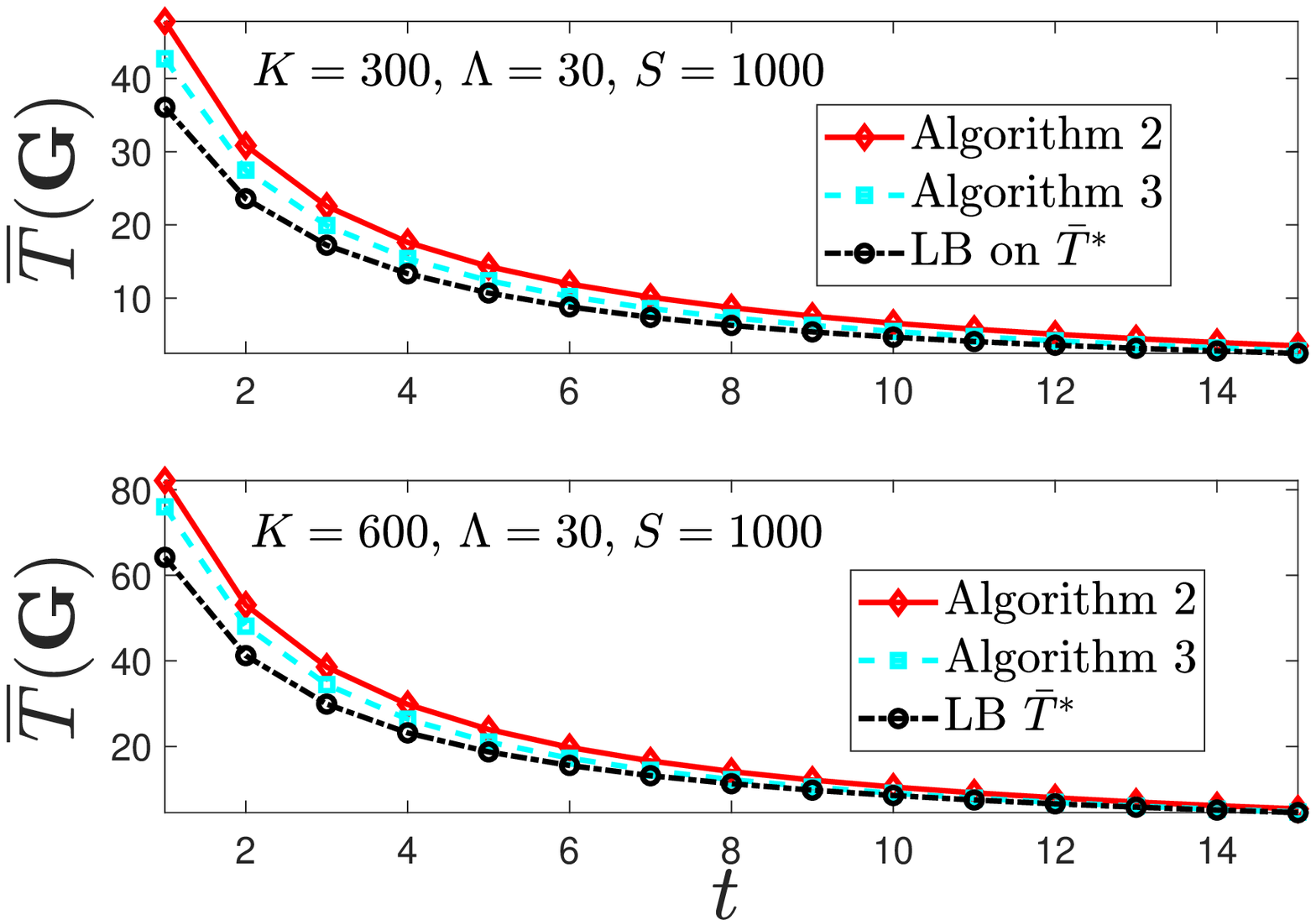} 
\caption{ $\overline{T}(\mathbf{G})$ of \textbf{Algorithm \ref{alg:algo_1}} and \textbf{Algorithm \ref{alg:algo_2}} from~\eqref{eq:tavg_data}  vs. lower bound (LB) on $\overline{T}^*$ from~\eqref{eq:tavg_op_lb}.}
\label{fig:algo2_vs_algo3}
\end{figure}
\begin{figure}[t]
\centering
\includegraphics[ width=.9\linewidth]{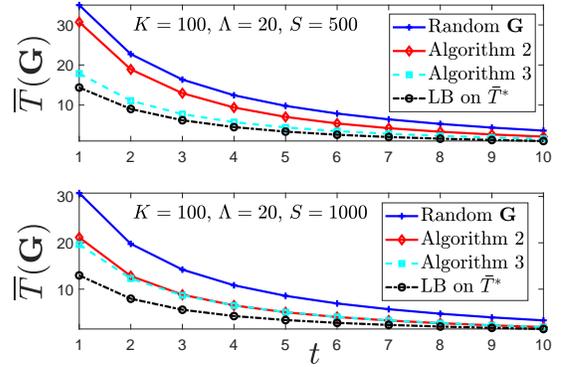} 
\caption{$\overline{T}(\mathbf{G})$ of random user-to-cache state association, the user-to-cache state associations obtained from \textbf{Algorithm \ref{alg:algo_1}}, and \textbf{Algorithm \ref{alg:algo_2}} from~\eqref{eq:tavg_data}, and the lower bound (LB) of~\eqref{eq:tavg_op_lb}.}
\label{fig:rand_algo2_vs_algo3}
\end{figure}

In our final evaluation, we highlight the importance of exploiting the user-activity patterns and finding an efficient user-to-cache state association. Figure~\ref{fig:rand_algo2_vs_algo3} compares $\overline{T}(\mathbf{G})$ values for random user-to-cache state association, and the user-to-cache state associations obtained from \textbf{Algorithm \ref{alg:algo_1}} and \textbf{Algorithm \ref{alg:algo_2}}, where the lower bound (LB) of~\eqref{eq:tavg_op_lb} serves as a benchmark. It turns out that both algorithms outperform random user-to-cache state association, and the corresponding delay performances perform very close to the optimal LB on $\overline{T}^*$, with once again \textbf{Algorithm \ref{alg:algo_2}} having a slight advantage over \textbf{Algorithm \ref{alg:algo_1}}.

\section{Conclusions}\label{sec:conclusions}
In this work we analyzed coded caching networks with finite number of cache states and a user-to-cache state association subject to a grouping strategy in the presence of heterogeneous user activity. Even though coded caching techniques rely on the assumption of having enough number of users to provide its theoretically promised gains, all the earlier works ignored the fact of heterogeneity in user activities, which in our opinion has direct practical ramifications, as it captures practical wireless networks more accurately.

We first presented a statistical analysis of the average worst-case delivery performance of state-constrained coded caching networks, and provided bounds and scaling laws under the assumption of probabilistic user-activity levels. We also proposed a heuristic user-to-cache state association algorithm with the ultimate goal of minimizing the average delay. 

Next, we extended our analysis to the data-driven setting, where we were able to learn from the past $S$ different demand vectors in designing the caching policy. By exploiting this bounded-depth user request history, the emphasis then was placed on finding the optimal user-to-cache state association, as computing the average delay for any given data is trivial. We proposed two algorithms for finding the optimal user-to-cache state association strategy, with the first algorithm providing the optimal within a constant gap, and with the second algorithm numerically outperforming the first one. 

For both aforementioned settings, the results highlighted the essence of exploiting the user activity level, and the importance of carefully associating users to cache states based on their activity patterns.

\appendix
\subsection{Proof of Theorem \ref{th:UBLB}} \label{AP:UBLB}
Exploiting the fact that in \eqref{eq:tavg}, both ${\Lambda-\lambda \choose t}$ and $E[l_{\lambda}]$ are non-increasing with $\lambda$, the average delay $\overline{T}(\mathbf{G})$ is bounded by
\begin{align}\label{eq:TUB}
\!\!\overline{T}(\mathbf{G}) & \leq \!   \sum_{\lambda=1}^{ \Lambda-t} E[ l_{1} ] \frac{{\Lambda-\lambda \choose t} }{{\Lambda \choose t}}   \stackrel{(a)}{=}   E[ l_{1} ] \frac{  {\Lambda \choose t+1}}{{\Lambda \choose t}} =   E[ l_{1} ] \frac{\Lambda-t}{1+t},
 \end{align}
 and
 \begin{align} \label{eq:TLB}
\overline{T}(\mathbf{G})& \stackrel{(b)}{\geq}  \frac{  E[ l_{1} ]   {\Lambda-1 \choose t} \!+ \! \sum_{\lambda=2}^{\Lambda-t} \frac{K_{\mathbf{p}}-E[ l_{1}]}{\Lambda-1}  {\Lambda-\lambda \choose t} }{{\Lambda \choose t}}\nonumber\\ &\stackrel{(c)}{=}   E[ l_{1} ]  \frac{   {\Lambda-1 \choose t}} {{\Lambda \choose t}}  + \frac{K_{\mathbf{p}}-E[ l_{1}]}{\Lambda-1}  \frac{  {\Lambda-1 \choose t+1}}{{\Lambda \choose t}} \nonumber \\
& =  \frac{\Lambda-t}{1+t} \left( \frac{E[ l_{1} ] t  }{ \Lambda-1} +\frac{K_{\mathbf{p}}}{\Lambda}  \frac{  \Lambda-t-1}{\Lambda-1} \right),
 \end{align}
 where in steps (a) and (c), we inherit the the column-sum property of Pascal's triangle yielding $\sum^{n}_{k=0} {k\choose t}= {n+1\choose t+1}$, while in step (b), we have\footnote{It straightforward to see that  $\sum\limits_{\lambda\in [\Lambda]}E[ l_{\lambda}]=    \sum\limits_{i=0}^K \sum\limits_{ \mathbf{L}\in \mathcal{L} : i=\sum\limits_{j \in [\Lambda]} l_{j}  } \sum\limits_{\lambda\in[\Lambda]} l_{\lambda} P(\mathbf{L}) = \sum\limits_{i=0}^K \sum\limits_{ \mathbf{L}\in \mathcal{L} : i=\sum\limits_{j \in [\Lambda]} l_{j}  }i P(\mathbf{L}) =  K_{\mathbf{p}}$.} $K_{\mathbf{p}}=\sum_{\lambda=1}^{\Lambda}E[ l_{\lambda}]$, and the fact that uniformity in $\mathbf{L}$ leads to the minimum $\overline{T}(\mathbf{G})$.

 Next, to complete the proof, we proceed to derive the expected number of active users that are storing the content of the most loaded cache state (i.e., $E[l_1]$), which is given by
   \begin{align}
E[l_{1}] &= \sum_{x=0}^{A-1} P[l_{1} > x]  = \sum_{x=0}^{A-1} \left(1- P[l_{1} \leq x] \right),
\end{align}
where $A=\max\left(\left\{|\mathbf{G}_{\lambda}|\right\}_{\lambda=1}^{\Lambda}\right)$, $\mathbf{G}_{\lambda}$ is the set of users caching the content of cache state $\lambda$, and  $P[l_{1} \leq x]$ is the probability that number of active users storing the content of the most loaded cache state are less than or equal to $x$. From \cite[Proposition 3]{caraux_92}, we have 
\begin{align}
  P[l_{1} \leq x] \geq \max \left(0,1-\Lambda+ \sum_{\lambda=1}^{\Lambda} F_{v_{\lambda}}(x)  \right)
\end{align}
and 
\begin{align}
P[l_{1} \leq x] \leq  \frac{\sum_{\lambda=1}^{\Lambda} F_{v_{\lambda}}(x) }{\Lambda}, 
\end{align}
 where $F_{v_{\lambda}}(x)$ is the probability that no more than $x$ users that are caching the content of cache state $\lambda \in [\Lambda]$ are active (i.e., $P[v_{\lambda} \leq x ]$). Then $E[l_{1}]$ is bounded by  
\begin{align}\label{eq:el1_ub}
E[l_{1}]&\leq A- \sum_{x=0}^{A-1} \max \left(0,1-\Lambda+ \sum_{\lambda=1}^{\Lambda} F_{v_{\lambda}}(x)  \right) 
\end{align}
and
\begin{align}\label{eq:el1_lb}
E[l_{1}]\geq A- \sum_{x=0}^{A-1} \frac{\sum_{\lambda=1}^{\Lambda} F_{v_{\lambda}}(x) }{\Lambda}.
\end{align}
For each cache state $\lambda \in [\Lambda]$, the corresponding random variable $v_{\lambda}$ follows the Poisson binomial distribution. Using Hoeffding's inequalities \cite[Theorem 2.1]{tang_arxiv19}, $F_{v_{i}}(x)$ is bounded by 
\begin{align}\label{eq:cdflb}
F_{v_{\lambda}}(x) \geq 
\begin{cases} 0  \hspace{2.5cm} \text{ for } 0 \leq x\leq \mu_{\lambda}-1    \\
F_{bin}\left(|\mathbf{G}_{\lambda}|, \frac{\mu_{\lambda}}{|\mathbf{G}_{\lambda}|}, x\right)\text{ for }  \mu_{\lambda} \leq x\leq |\mathbf{G}_{\lambda}|
\end{cases}
\end{align}
and
 \begin{align}\label{eq:cdfub}
F_{v_{\lambda}}(x) \leq 
\begin{cases}
F_{bin}\left(|\mathbf{G}_{\lambda}|, \frac{\mu_{\lambda}}{|\mathbf{G}_{\lambda}|}, x\right) \text{ for }  0 \leq x\leq \mu_{\lambda} -1 \\
1. \hspace{2.4cm}\text{ for }  x> \mu_{\lambda}-1,
\end{cases}
\end{align}
where $\mu_\lambda=\sum_{k\in \mathbf{G}_{\lambda}} p_{k}$ is the expected number active users that are storing the content of cache state $\lambda \in [\Lambda]$ and $F_{bin}\left(n, q, x\right) = \sum_{i=0}^{x} {n \choose i} q^i \left(1-q\right)^{n-i}$ is the Binomial cumulative distribution function.

Finally, the upper bound in \eqref{eq:UBtavg} can be obtained from \eqref{eq:TUB}, \eqref{eq:el1_ub}, and \eqref{eq:cdflb}; and the lower bound in \eqref{eq:LBtavg} can be obtained from \eqref{eq:TLB}, \eqref{eq:el1_lb}, and \eqref{eq:cdfub}.

\subsection{Proof of Theorem \ref{th:lawtavg}} \label{AP:lawtavg}
From~\eqref{eq:TUB} and~\eqref{eq:TLB}, we have, 
\begin{align}\label{eq:t_ub_law}
\!\!\overline{T}(\mathbf{G})  = \!  O\left( E[ l_{1} ] \frac{\Lambda-t}{1+t}\right),
 \end{align}
 and
 \begin{align} \label{eq:t_lb_law}
\overline{T}(\mathbf{G}) = \Omega\left( \frac{\Lambda-t}{1+t} \frac{E[ l_{1} ] t  }{ \Lambda-1}\right) .
 \end{align}
As $\frac{t}{\Lambda-1} \approx \gamma$ is a constant, we get the exact scaling law of $\overline{T}(\mathbf{G})$, which is given by
\begin{align}\label{eq:t_law}
\overline{T}(\mathbf{G}) = \Theta\left( E[ l_{1} ] \frac{\Lambda-t}{1+t}\right).
\end{align}
We know from~\cite[Proposition 1]{gascuel_92} that $E[l_1]$ is bounded by 
\begin{align}
\frac{1}{\Lambda}\sum_{\lambda=1}^{\Lambda} \mu_{\lambda}  &\leq E[l_{1}] \leq \frac{1}{\Lambda}\sum_{\lambda=1}^{\Lambda} \mu_{\lambda} \nonumber\\ &+\sqrt{\frac{\Lambda\!-\!1}{\Lambda}\sum_{\lambda=1}^{\Lambda} \!\left(\!\sigma_{\lambda}^2\!+\! \left(\!\mu_{\lambda}\!-\!\frac{1}{\Lambda}\sum_{\lambda=1}^{\Lambda} \mu_{\lambda}\!\right)^2\right) },
\end{align}
where $\mu_{\lambda}= \sum_{k\in G_{\lambda}} p_{k}$ and $\sigma^2_{\lambda}= \sum_{k\in G_{\lambda}} p_{k} (1-p_{k})$ are the mean and the variance of the number of active users that are caching the content of cache state $\lambda \in [\Lambda]$ respectively. After defining a new parameter $\mu= \frac{1}{\Lambda}\sum_{\lambda=1}^{\Lambda} \mu_{\lambda}$, we have 
\begin{align}
\!\!\overline{T}(\mathbf{G})  = \!  O\left( \left(\mu  + \sqrt{\sum_{i=1}^{\Lambda} [\sigma_i^2+ (\mu_i-\mu)^2] }\right) \frac{\Lambda-t}{1+t}\right),
 \end{align}
 and
 \begin{align}
\overline{T}(\mathbf{G}) = \Omega\left(  \mu \frac{\Lambda-t}{1+t}\right).
 \end{align}
\textcolor{black}{This concludes the proof of Theorem \ref{th:lawtavg}.}

\subsection{Proof of Theorem \ref{th:UBLB_iden}} \label{AP:UBLB_iden}
We start our proof by deriving the expected number of active users that are storing the content of the most loaded cache state (i.e., $E[l_1]$).  Assuming that each user independently requests a content with probability $p$, the probability that no more than $x$ out of $I$ users are active and storing the content of cache state $\lambda \in [\Lambda]$ is given by
\begin{align}
F_{v_{\lambda}}(x) =\sum_{i=0}^{x} {I \choose i} p^i \left(1-p\right)^{I-i}.
\end{align}
Then, the probability that $l_1$ (i.e., $\max(\mathbf{V})$, the maximum number of active users among all caches) is less than or equal to $j$, is equal to the probability of the event $v_{\lambda} \leq j$, $\forall$ $\lambda \in [\Lambda]$, and given by
\begin{align}
P[l_1 \leq x] = \prod_{\lambda=1}^{\Lambda} F_{v_{\lambda}}(x)= \left(\sum_{i=0}^{x} {I \choose i} p^i \left(1-p\right)^{I-i}\right)^{\Lambda}.
 \end{align}
Now, we can characterize $E[l_1]$ as follows
\begin{align}\label{eq:el1_iden}
E[l_1]\!&=\!\sum_{x=0}^{I-1} \left(1\!-\!P[l_1 \leq x] \right)\!=\!I\!-\! \sum_{x=0}^{I-1}\!\left(\!\sum_{i=0}^{x}\!{I \choose i}\!p^i\! \left(\!1\!-\!p\!\right)^{I\!-\!i}\right)^{\Lambda}.
\end{align}

Finally, we obtain the upper and lower bounds in Theorem \ref{th:UBLB_iden} by combining \eqref{eq:TUB} and \eqref{eq:TLB} with \eqref{eq:el1_iden}, respectively. 

\subsection{Proof of Theorem \ref{th:lawtavg_iden}} \label{AP:lawtavg_iden}
To prove Theorem~\ref{th:lawtavg_iden}, we will follow a similar approach as in~\cite{martin_98}. For each cache state $\lambda \in [\Lambda]$, we denote $Y_{\lambda}$ to be an indicator random variable, which is equal to $1$ if $v_{\lambda} \geq k_{\alpha}$, and it is equal to $0$ otherwise. It immediately follows that $E[Y_{\lambda}]= P[v_{\lambda} \geq k_{\alpha}]$, $\forall \lambda \in [\Lambda]$. Let $Y=\sum_{\lambda=1}^{\Lambda} Y_{\lambda}$ be the sum of the indicators over all cache states. Then, we have
\begin{align}\label{eq:EY}
    E[Y]=E\left[\sum_{\lambda=1}^{\Lambda} Y_{\lambda}\right]= \sum_{\lambda=1}^{\Lambda}E\left[ Y_{\lambda}\right] = \Lambda  P[v_{\lambda} \geq k_{\alpha}].
\end{align}
From \cite[Section 2]{martin_98}, we inherit the following properties that are drawn from the outcomes of Markov's inequality and Chebyshev's inequality
\begin{align} \label{eq:p_Y0}
P[Y=0]= 
\begin{cases} 
1-o(1)  \hspace{0.45cm} \text{ if } \log(E[Y])\rightarrow -\infty \\
o(1)  \hspace{1cm} \text{ if } \log(E[Y]) \rightarrow \infty. \\
\end{cases}
\end{align}
Consequently, the probability that there exists at least one cache state $\lambda$ for which the number of active users $v_{\lambda}$ is at least $k_{\alpha}$ is given by 
\begin{align} \label{eq:p_Y1}
P[Y\geq 1]= 
\begin{cases} 
o(1)  \hspace{1cm} \text{ if } \log(E[Y])\rightarrow -\infty \\
1-o(1)  \hspace{0.45cm} \text{ if } \log(E[Y]) \rightarrow \infty. \\
\end{cases}
\end{align}
We now proceed with the following results which are crucial for the derivation of the asymptotics of $E[l_1]$. 
\begin{lemma}[\protect{\cite[Lemma 2]{martin_98} - adaptation}]\label{le:cdf_V}
For a positive constant $c$, if $Ip+1 \leq x\leq \left(\log I\right)^c$,  then 
\begin{align}\label{eq:cdf_V_c1}
   P[v_{\lambda} \geq x]=e^{x(\log Ip-\log x + 1)-Ip+O\left(log^{(2)}I\right)}
\end{align}
and if $x=Ip + o\left( (p(1-p)I)^{\frac{2}{3}}\right)$ and $z=\frac{x-Ip}{\sqrt{p(1-p)I}}$ tends to infinity, then 
\begin{align}\label{eq:cdf_V_c2}
   P[v_{\lambda} \geq x]= e^{-\frac{z^2}{2}-\log z-\log \sqrt{2\pi}+ o(1)}
\end{align}
\end{lemma}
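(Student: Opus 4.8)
The plan is to treat $v_\lambda$ as a $\mathrm{Binomial}(I,p)$ variable --- which it is in the uniform setting of Theorem~\ref{th:lawtavg_iden}, since each of the $I=K/\Lambda$ users assigned to state $\lambda$ is active independently with probability $p$ --- and to follow the Stirling-based argument of \cite[Lemma 2]{martin_98}, adapting it from the balls-into-bins success probability $1/n$ to a general $p$. The common starting point for both regimes is that, for $x>Ip$, the summands of $P[v_\lambda\ge x]=\sum_{i\ge x}\binom{I}{i}p^i(1-p)^{I-i}$ are decreasing, their consecutive ratio being $r_i=\frac{I-i}{i+1}\frac{p}{1-p}<1$. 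Hence the tail equals its leading term $b_x:=\binom{I}{x}p^x(1-p)^{I-x}$ up to a multiplicative correction lying between $1$ and $1/(1-r_x)$, and the work reduces to expanding $\log b_x$ by Stirling and to controlling this correction on the relevant scale.

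For the first regime, $x\le(\log I)^c$ forces $x=o(I)$ and $p\le x/I\to0$. Writing $\binom{I}{x}=\frac{I^x}{x!}\prod_{j=0}^{x-1}(1-j/I)$, the product perturbs the exponent by only $O(x^2/I)=o(1)$, while $(I-x)\log(1-p)=-Ip+o(1)$ since $xp$ and $Ip^2$ are both $o(1)$ here; this is precisely where the generalization away from $p=1/n$ shows up. Stirling's formula $\log x!=x\log x-x+\tfrac12\log(2\pi x)+O(1/x)$ then gives $\log b_x=x(\log Ip-\log x+1)-Ip-\tfrac12\log(2\pi x)+o(1)$, and since $\log x=O(\log^{(2)}I)$ the penultimate term is $O(\log^{(2)}I)$. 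Because the target \eqref{eq:cdf_V_c1} only claims accuracy up to $O(\log^{(2)}I)$ in the exponent, the crude correction $\log\frac{1}{1-r_x}=O(\log x)=O(\log^{(2)}I)$ is harmlessly absorbed, completing this case.

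For the second regime I put $\sigma=\sqrt{Ip(1-p)}$ and $x=Ip+z\sigma$; here \eqref{eq:cdf_V_c2} demands the exponent up to $o(1)$, so the tail must be summed exactly rather than merely bounded. Stirling gives $\log b_x=-I\,D(x/I\,\|\,p)-\tfrac12\log\!\big(2\pi I\tfrac{x}{I}(1-\tfrac{x}{I})\big)+o(1)$, where $D(\cdot\|p)$ is the relative entropy; a second-order Taylor expansion about $p$ yields $I\,D(x/I\|p)=\tfrac{z^2}{2}+O\!\big(z^3\sigma^3/I^2\big)$, and the window $x-Ip=o\big((p(1-p)I)^{2/3}\big)$ is exactly what renders the cubic remainder $o(1)$, while $\tfrac{x}{I}(1-\tfrac{x}{I})\to p(1-p)$ turns the second term into $-\log(\sigma\sqrt{2\pi})+o(1)$. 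Summing the tail precisely, the log-ratio $\log r_i$ equals $-z/\sigma+o(z/\sigma)$ over the effective range of $j=O(\sigma/z)$ significant terms, so $\sum_{j\ge0}b_{x+j}\sim b_x/(1-e^{-z/\sigma})\sim b_x\,\sigma/z$, contributing $\log\sigma-\log z$ to the exponent; the two copies of $\log\sigma$ cancel, leaving $-\tfrac{z^2}{2}-\log z-\log\sqrt{2\pi}+o(1)$, which is \eqref{eq:cdf_V_c2}.

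I expect the main obstacle to be the second regime's insistence on $o(1)$ precision in the exponent: a coarse geometric bound only pins the tail down to a constant factor, which would corrupt the $\log z$ term, so the tail sum must be evaluated to leading order to produce the exact $\sigma/z$ (and the cancellation of $\log\sigma$). This, together with verifying that the cubic term of the entropy expansion is genuinely $o(1)$ throughout the admissible window $x-Ip=o\big((p(1-p)I)^{2/3}\big)$ --- which is where the stated range of $x$ is actually consumed --- is the delicate heart of the argument, whereas the first regime is comparatively forgiving thanks to its $O(\log^{(2)}I)$ slack.
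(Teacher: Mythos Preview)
Your argument is correct. The paper itself does not prove this lemma: its proof reads, in full, ``The result comes directly from \cite[Lemma~2]{martin_98}.'' So there is no in-paper argument to compare against; you have supplied the explicit Stirling-based derivation that the cited reference contains, carried from success probability $1/n$ to a general $p$, which is exactly the adaptation the lemma's attribution advertises.

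One small technical remark on the second regime: your cubic remainder $O(z^3\sigma^3/I^2)$ understates the $p$-dependence, since $D'''(q)$ near $p$ is of order $1/(p^2(1-p)^2)$, so the correct bound on $I\,D(x/I\,\|\,p)-z^2/2$ is $O\!\big((x-Ip)^3/(I^2p^2(1-p)^2)\big)$. This does not affect your conclusion, because the hypothesis $x-Ip=o\big((p(1-p)I)^{2/3}\big)$ gives $(x-Ip)^3=o\big((Ip(1-p))^2\big)$, which kills the corrected remainder just as well. Everything else --- the $O(\log^{(2)}I)$ absorption of both the $\tfrac12\log(2\pi x)$ term and the geometric tail correction in the first regime, and the $\sigma/z$ tail evaluation producing the $\log\sigma$ cancellation in the second --- is handled correctly.
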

\begin{proof}
The result comes directly from \cite[Lemma 2]{martin_98}.
\end{proof}
\begin{lemma}\label{le:cdf_l1}
In a $K$-user $\Lambda$-cache state setting where each user requests a content with probability $p$, the probability that the maximum number of active users among all caches is less than or equal to $k_{\alpha}$, takes the form
\begin{align} \label{eq:cdf_l1}
P[l_1 \geq k_{\alpha}]= 
\begin{cases} 
 o\left( 1\right) \hspace{0.55cm}\text{ \emph{if} } \ \   \alpha  > 1 \\
 1- o\left( 1\right) \text{ \emph{if} }    \ \ 0 < \alpha < 1,
\end{cases}
\end{align} 
for 
\begin{align} \label{eq:ka}
k_{\alpha}= 
\begin{cases} 
Ip+ \sqrt{2\alpha Ip(1-p)\log(\Lambda)}, \text{ \emph{if} }  Ip  = \omega\left(\left(\log\Lambda\right)^3\! \right)\\
\left(\!1\!+\!\alpha\sqrt{\frac{2\log \Lambda}{Ip}}\right)\!Ip, \text{ \emph{if} } Ip\!\in\!\left[\omega\!\left(\!\log\!\Lambda\!\right),\!O(\!\polylog\! \Lambda\!)\right]\\
 \left(\alpha+ e-1\right)Ip, \text{ \emph{if} }  Ip = \Theta \left(\log \Lambda\right) \\
\frac{\log \Lambda}{\log\!\frac{\log \Lambda}{Ip}}\!\left(\!1\!+\!\alpha\!\frac{\log^{(2)}\!\frac{\log \Lambda}{Ip}}{\log\!\frac{\log \Lambda}{Ip}}\!\right)\!,\! \text{ \emph{if} } Ip\! \in\!\left[\!\Omega\!\left(\!\frac{1}{\polylog\!\Lambda}\!\right)\!,\!  o(\!\log\!\Lambda\!)\!\right]\!.
\end{cases}
\end{align}
\end{lemma}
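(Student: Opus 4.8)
The plan is to characterize the tail behavior of $l_1 = \max_\lambda v_\lambda$ by a second-moment / first-moment dichotomy driven entirely by the quantity $E[Y] = \Lambda \, P[v_\lambda \geq k_\alpha]$, which was already set up in \eqref{eq:EY}. Since the events $\{v_\lambda \geq k_\alpha\}$ across distinct cache states are independent (the $\Lambda$ caches receive disjoint groups of $I$ users, each user active independently with probability $p$), the indicator sum $Y$ is in fact a sum of $\Lambda$ i.i.d.\ Bernoulli$(P[v_\lambda \geq k_\alpha])$ variables. First I would invoke \eqref{eq:p_Y1}: the event $\{l_1 \geq k_\alpha\}$ is precisely the event $\{Y \geq 1\}$, so it suffices to show that $\log(E[Y]) \to +\infty$ when $\alpha < 1$ and $\log(E[Y]) \to -\infty$ when $\alpha > 1$. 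Because $\log(E[Y]) = \log \Lambda + \log P[v_\lambda \geq k_\alpha]$, the whole lemma reduces to establishing, for each of the four regimes of $Ip$, the sharp asymptotic estimate
\begin{align}\label{eq:target_est}
\log P[v_\lambda \geq k_\alpha] = -\alpha \log \Lambda + o(\log \Lambda),
\end{align}
so that $\log(E[Y]) = (1-\alpha)\log\Lambda + o(\log\Lambda)$, which diverges to $+\infty$ for $\alpha<1$ and to $-\infty$ for $\alpha>1$.

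The technical core is therefore to verify \eqref{eq:target_est} by substituting each of the four candidate thresholds $k_\alpha$ from \eqref{eq:ka} into the appropriate tail estimate of Lemma~\ref{le:cdf_V}. For the two large-activity regimes ($Ip = \omega((\log\Lambda)^3)$ and $Ip \in [\omega(\log\Lambda), O(\polylog\Lambda)]$), I would use the Gaussian-type estimate \eqref{eq:cdf_V_c2}: plugging $k_\alpha = Ip + \sqrt{2\alpha Ip(1-p)\log\Lambda}$ gives $z = \sqrt{2\alpha\log\Lambda}$, hence the dominant exponent $-z^2/2 = -\alpha\log\Lambda$, with the $-\log z - \log\sqrt{2\pi} + o(1)$ corrections being $O(\log^{(2)}\Lambda) = o(\log\Lambda)$; the $(1+\alpha\sqrt{2\log\Lambda/Ip})Ip$ form is the same computation rewritten to keep the deviation $o((p(1-p)I)^{2/3})$ valid in the intermediate range. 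For the small-activity regime $Ip \in [\Omega(1/\polylog\Lambda), o(\log\Lambda)]$, I would instead use the Poisson/large-deviation estimate \eqref{eq:cdf_V_c1}, whose exponent is $x(\log Ip - \log x + 1) - Ip$; substituting the threshold $k_\alpha = \frac{\log\Lambda}{\log(\log\Lambda/Ip)}\bigl(1 + \alpha \frac{\log^{(2)}(\log\Lambda/Ip)}{\log(\log\Lambda/Ip)}\bigr)$ and carefully expanding $\log x$ and the leading term should collapse the exponent to $-\alpha\log\Lambda(1+o(1))$. The borderline case $Ip = \Theta(\log\Lambda)$ with $k_\alpha = (\alpha + e - 1)Ip$ is handled the same way via \eqref{eq:cdf_V_c1}, noting that there $\log(Ip)$ and $\log\Lambda$ are comparable.

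The main obstacle I expect is the fourth (small-$Ip$) regime: the threshold $k_\alpha$ there has the delicate nested-logarithm correction $\alpha\,\log^{(2)}(\cdot)/\log(\cdot)$, and one must track the first-order correction to the exponent of \eqref{eq:cdf_V_c1} precisely enough to pin down the coefficient of $\log\Lambda$ as exactly $-\alpha$ rather than merely $-\Theta(1)$. Concretely, writing $r = \log(\log\Lambda/Ip)$ and $x = \frac{\log\Lambda}{r}(1 + \alpha\frac{\log^{(2)} r}{r})$, the term $x(\log Ip - \log x + 1)$ must be expanded so that the leading piece yields $-\log\Lambda$ (cancelling the $\log\Lambda$ from $E[Y]$) and the correction yields exactly $-\alpha\log\Lambda$; this requires that the $O(\log^{(2)} I)$ slack in \eqref{eq:cdf_V_c1} be genuinely lower order than $\log\Lambda$, which is where the lower bound $Ip = \Omega(1/\polylog\Lambda)$ on the regime is used. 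Once \eqref{eq:target_est} is confirmed in all four regimes, the conclusion \eqref{eq:cdf_l1} follows immediately from the dichotomy in \eqref{eq:p_Y1}, completing the proof.
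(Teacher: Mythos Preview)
Your overall architecture is exactly the paper's: reduce to the dichotomy in \eqref{eq:p_Y1} and then, regime by regime, plug the candidate $k_\alpha$ into the appropriate tail estimate from Lemma~\ref{le:cdf_V} to determine the sign of the divergence of $\log(E[Y])$. Two points, however, need correction.

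First, your uniform target \eqref{eq:target_est} is too optimistic. The leading coefficient of $\log\Lambda$ in $\log(E[Y])$ is \emph{not} $(1-\alpha)$ in all four regimes: the paper's computations give $(1-\alpha)$ in the first regime, $(1-\alpha^2)$ in the second, $(\alpha+e-1)(1-\log(\alpha+e-1))$ in the third, and a factor $(1-\alpha)\frac{\log^{(2)}g}{\log g}$ (premultiplied by $\log\Lambda/\log g$) in the fourth. Each of these happens to change sign at $\alpha=1$, which is what the lemma actually needs, but you should not expect a single clean $-\alpha\log\Lambda$ exponent to fall out uniformly.

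Second, and more importantly, your plan to use the Gaussian estimate \eqref{eq:cdf_V_c2} in the second regime $Ip\in[\omega(\log\Lambda),O(\polylog\Lambda)]$ fails: the hypothesis of \eqref{eq:cdf_V_c2} requires $k_\alpha-Ip=o((p(1-p)I)^{2/3})$, and with $k_\alpha-Ip=\alpha\sqrt{2Ip\log\Lambda}$ this forces $Ip=\omega((\log\Lambda)^3)$, which is precisely what distinguishes the first regime from the second. The paper therefore handles the second regime with the large-deviation estimate \eqref{eq:cdf_V_c1}, expanding $k_\alpha(\log Ip-\log k_\alpha+1)-Ip$ via the Maclaurin series for $\log(1+\alpha\sqrt{2/g})$ with $Ip=g\log\Lambda$; this is also why the two thresholds are genuinely different (note $\sqrt{\alpha}$ versus $\alpha$ and the missing $(1-p)$), not merely the ``same computation rewritten.''
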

\!\!\begin{proof}
We begin the proof for the case of $Ip\! =\! \omega \left(\left(\log \Lambda\right)^3\right)$. Let $k_{\alpha}\! =\! Ip+ \sqrt{2\alpha Ip(1\!-\!p)\log \Lambda}$, then from \eqref{eq:EY} and \eqref{eq:cdf_V_c2}, we have
\begin{align}
&\log(E[Y])=  \log \Lambda -\frac{z^2}{2}-\log z-\log \sqrt{2\pi} +o(1) \nonumber \\
 & =\log\Lambda\left(\!1 \!-\!\alpha-\!\frac{\log\!2\alpha+\log^{(2) 
 }\Lambda}{2\log\Lambda}\right)\!-\!\log\sqrt{2\pi}\!+\!o(1).
\end{align}
Using \eqref{eq:p_Y1}, we conclude the proof for this case as for $\Lambda \rightarrow \infty$, we have
\begin{align}
\log(E[Y]) \longrightarrow  \begin{cases}\!
-\infty \hspace{1.2cm}\text{ if }  \alpha>1 \\
\infty \hspace{1.35cm}\text{ if }  0 < \alpha < 1.
\end{cases}
\end{align}
Next, we proceed with the case of $Ip \in\left[ \omega\left(\log\Lambda\right),  O( \polylog(\Lambda))\right]$. We first define $g \triangleq O\left( \polylog(\Lambda)\right)$. Then, assuming that $k_{\alpha}=\left(1+\alpha\sqrt{\frac{2}{g}}\right)Ip$ and $Ip= g\log(\Lambda)$, from \eqref{eq:EY} and \eqref{eq:cdf_V_c1}, we have
\begin{align}
&\log(E[Y])\!= \!\log \Lambda \!+\!k_{\alpha}\left(\log Ip\!-\!\log k_{\alpha} \!+\!1\right)\!-\!Ip\!+\!O\left(\!\log^{(2)}\!I\!\right) \nonumber\\
&= \log\Lambda -k_{\alpha}\log\left(1+\alpha\sqrt{\frac{2}{g}}\right)+k_{\alpha}-Ip+O\left(\log^{(2)}I\right) \nonumber\\
&\stackrel{(a)}{=}\!\log\!\Lambda\! -\!k_{\alpha}\alpha\sqrt{\frac{2}{g}}\!\left(\!1-\!\alpha\!\sqrt{\frac{1}{2g}}\!+\! o\!\left(\!\alpha\sqrt{\frac{2}{g}}\!\right)\!\right)\!+\alpha Ip\sqrt{\frac{2}{g}}\nonumber\\
&+O\!\left(\!\log^{(2)}\!I\!\right) \nonumber\\
&=\!\log\!\Lambda\!\left(\!1\!-\!\alpha^2(1 \!+ \!o\left(1\right)) \!+\!\left(1\!-\! o\left(1\right)\right)\alpha^3\!\sqrt{\frac{2}{g}}\!+O\!\left(\!\frac{\log^{(2)}\!I}{\log \Lambda}\!\right)\!\right)\!,
\end{align}
where in step (a), we used the Maclaurin series expansion of the logarithm function, i.e., $\log(1+x)=x-0.5x^2+o(x^2)$. 
Using \eqref{eq:p_Y1}, we conclude the proof for this case as for $\Lambda \rightarrow \infty$, $\log(E[Y])$ converges to $(1\!-\!\alpha^2) \log \Lambda $, and we obtain
\begin{align}
\log(E[Y]) \longrightarrow  \begin{cases}\!
-\infty \hspace{1.2cm}\text{ if }  \alpha>1 \\
\infty \hspace{1.35cm}\text{ if }  0 < \alpha < 1.
\end{cases}
\end{align}
Now, we proceed with the case of $Ip =  \Theta\left(\log\Lambda\right)$. Assuming that $k_{\alpha} = \left(\alpha+ e-1\right)Ip$ and $Ip= \log \Lambda$, from \eqref{eq:EY} and \eqref{eq:cdf_V_c1}, we obtain
\begin{align}
&\log(E[Y])\!= \!\log\!\Lambda\!+\!k_{\alpha}\!\left(\log (Ip)\!-\!\log k_{\alpha} \!+\!1\!\right)\!-\!Ip\!+\!O\!\left(\log^{(2)}\!I\!\right) \nonumber\\
&=k_{\alpha}\left(1\!-\!\log\left(\alpha+ e-1\right)\right) +O\left(log^{(2)}I\right)\nonumber\\
&=\log\Lambda \left(\left(\alpha\!+\!e\!-1\right)\left(1\!-\!\log\left(\alpha\!+\! e\!-\!1\right)\right)+O\left(\frac{log^{(3)}\Lambda}{p\log \Lambda}\right)\right).
\end{align}
Using \eqref{eq:p_Y1}, we conclude the proof for this case as for $\Lambda \rightarrow \infty$, we have
\begin{align}
\log(E[Y]) \longrightarrow  \begin{cases}
-\infty \hspace{1.15cm}\text{ if }  \alpha>1 \\
\infty \hspace{1.35cm}\text{ if }  0 < \alpha < 1 
\end{cases}
\end{align}
Finally, we consider the case of $Ip \in\left[ \Omega\left(\frac{1}{\polylog\Lambda}\right),  o( \log(\Lambda))\right]$. We first define $g \triangleq O\left( \polylog(\Lambda)\right)$. Then, assuming that $k_{\alpha} = \frac{\log \Lambda}{\log g}\left(1+\alpha\frac{\log^{(2) }g}{\log g}\right)$ and $Ip=\frac{\log(\Lambda)}{g}$, from \eqref{eq:EY} and \eqref{eq:cdf_V_c1}, we obtain
\begin{align}
&\log(E[Y])\!= \!\log\!\Lambda \!+\!k_{\alpha}\!\left(\log(Ip)\!-\!\log k_{\alpha}\!+\!1\!\right)\!-Ip\!+\!O\!\left(\!\log^{(2)}\!I\!\right) \nonumber\\
&=\!\log \Lambda \!+\!k_{\alpha}\Bigg(\!\!\log^{(2)}\! \Lambda\!-\! \log g \!-\!\log^{(2)} \!\Lambda \!+\!  \log^{(2)}\! g\!\nonumber\\
&-\! \log \left(\!1\!+\!\alpha\frac{\log^{(2) }g}{\log g}\!\right)\!+\!1\!\!\Bigg)\!-\!\frac{\log(\Lambda)}{g}\!+\!O\left(\!\log^{(2)}\!I\!\right) \nonumber\\
&\stackrel{(a)}{=}\log\!\Lambda\! +\!k_{\alpha}\Bigg(\!\!1\! -\!\log\!g \!+\!  \log^{(2)} g\!-\! \bigg[\! \alpha\frac{\log^{(2)}\!g}{\log g} \!-\!\frac{1}{2}\!\left(\!\alpha\frac{\log^{(2)}\!g}{\log g}\!\right)^2 \nonumber\\&+ o\bigg(\bigg(\alpha\frac{\log^{(2) }g}{\log g}\bigg)^2\bigg)\bigg]\Bigg) -\frac{\log(\Lambda)}{g}+O\left(\log^{(2)}I\right) \nonumber\\
&=\frac{\log \Lambda\log^{(2) }g}{\log g}\left(1 -\alpha + \alpha\frac{\log^{(2) }g}{\log g} +\frac{1}{\log^{(2) }g}   -\frac{\log g}{g\log^{(2) }g}\right.\nonumber\\&\left.+O\left(\frac{\log^{(2)}I \log g}{\log\Lambda\log^{(2)}g}\right) +\alpha^2\frac{\log^{(2) }g}{(\log g)^2}\Bigg[\!-0.5\!+\!0.5\!\left(\alpha\frac{\log^{(2) }g}{\log g}\right)\right. \nonumber\\&- \left.o\left(\left(\alpha\frac{\log^{(2) }g}{\log g}\right)\right)-o(1)\right)\Bigg],
\end{align}
where in step (a), we used the Maclaurin series expansion of the logarithm function, i.e., $\log(1+x)=x-0.5x^2+o(x^2)$. Using \eqref{eq:p_Y1}, we conclude the proof for this case as for $\Lambda \rightarrow \infty$, $\log(E[Y])$ converges to $\frac{\log \Lambda\log^{(2) }g}{\log g} \left( 1   -\alpha\right)$, and we obtain
\begin{align}
\log(E[Y]) \longrightarrow  \begin{cases} -\infty \hspace{1.15cm}\text{ if }  \alpha>1 \\
\infty \hspace{1.35cm}\text{ if }  0 < \alpha < 1
\end{cases}
\end{align}
This concludes the proof of Lemma~\ref{le:cdf_l1}.
\end{proof}
With Lemma~\ref{le:cdf_l1} at hand, we proceed to characterize $E[l_1]$. Let us first consider the case of $\alpha > 1$, for which we have 
\begin{align} \label{eq:ubl1}
&E[l_1] = \sum_{j=1}^{k_{\alpha}-1} P[l_1 \geq j] +  P[l_1 \geq k_{\alpha} ] +\sum_{j=k_{\alpha}+1}^{I} P[l_1 \geq j]\nonumber\\
&\stackrel{(a)}{\leq} k_{\alpha}-1 +  o(1) + (I-k_{\alpha})o(1)  = O\left( k_{\alpha}\right), 
\end{align}
where in step (a), we use the fact that $P[l_1 \geq j]$ is at most $1$ for $j= \left[ 1, \cdots k_{\alpha}-1\right]$, and if $P[l_1 \geq k_{\alpha}] =o(1)$ then $P[l_1 \geq j]$ is at most $o(1)$  for $j= \left[ k_{\alpha}+1, \cdots I\right]$. 

Similarly, for $ 0 < \alpha < 1$, we have 
\begin{align}\label{eq:lbl1}
E[l_1] &= \sum_{j=1}^{k_{\alpha}-1} P[l_1 \geq j] +  P[l_1 > k_{\alpha} ] +\sum_{j=k_{\alpha}+1}^{I} P[l_1 \geq j] \nonumber \\
&\stackrel{(a)}{\geq} (k_{\alpha}-1) (1-o(1)) + 1- o(1) = \Omega\left( k_{\alpha}\right), 
\end{align}
where in step (a), we use the fact that $\sum_{j=k_{\alpha}+1}^{I} P[l_1 \geq j] \geq 0$, and if $P[l_1 \geq k_{\alpha}] =1-o(1)$ then $P[l_1 \geq j]$ is at least $1-o(1)$ for $j= \left[1, \cdots, k_{\alpha}-1\right]$. Combining~\eqref{eq:ka},~\eqref{eq:ubl1}, and~\eqref{eq:lbl1}, we have
\begin{align}\label{eq:AL1}
E[l_1] =
 \begin{cases} 
\Theta\left(Ip+ \sqrt{ Ip(1-p)\log(\Lambda)}\right), \text{ if }  Ip\!=\!\omega\!\left(\!\left(\!\log\!\Lambda\!\right)^3\!\right)\\
\Theta\left(\!Ip\!+\!\sqrt{Ip\log\!\Lambda}\right)\!, \text{ if } Ip\!\in\!\left[ \Omega\!\left(\!\log\!\Lambda\!\right),  O(\!\polylog\!\Lambda\!)\right]\\
\Theta \left(\frac{\log \Lambda}{\log \frac{\log \Lambda}{Ip}}\right), \text{ if }  Ip \in\left[\Omega\left(\frac{1}{\polylog\!\Lambda}\right),  o(\log\Lambda)\right].
\end{cases}
\end{align}
From~\eqref{eq:UBtavg_iden} and~\eqref{eq:LBtavg_iden}, we have, 
\begin{align}
\overline{T}(\mathbf{G}) = O\left( \frac{Kp(1-\gamma)}{1+t} \frac{E[l_1]}{Ip}\right) \end{align}
and 
 \begin{align}
 \overline{T}(\mathbf{G}) =\Omega\left( \frac{Kp(1-\gamma)}{1+t}\left(\frac{ E[l_{1}] t  }{ Ip(\Lambda-1)} \right)\right).
 \end{align}
As $\frac{t}{\Lambda-1} \approx \gamma$ is a constant, we get the exact scaling law of $\overline{T}(\mathbf{G})$, which is given by
\begin{align}\label{eq:t_ex_law}
\overline{T}(\mathbf{G}) = \Theta\left( \frac{Kp(1-\gamma)}{1+t} \frac{E[l_1]}{Ip}\right) 
\end{align}
Combining~\eqref{eq:t_ex_law} with~\eqref{eq:AL1}, we obtain
\begin{align}
 \overline{T}\!(\!\mathbf{G}\!)\!=\!
 \begin{cases} 
\Theta\!\left(\!\frac{Kp(1-\gamma)}{1+t}\! \left(\!1\!+\!\sqrt{ \frac{(\!1\!-\!p\!)\!\log\!\Lambda}{Ip}}\right)\!\right)\!, \text{ if }  Ip\!=\! \omega\!\left(\!\left(\!\log\!\Lambda\!\right)^3\!\right)\\
\Theta\!\left(\!\frac{K\!p\!(\!1\!-\!\gamma\!)}{1\!+\!t}\!\!\left(\!1\!\!+\!\!\sqrt{\frac{\log\!\Lambda}{I\!p}}\!\right)\!\right)\!, \text{ if } I\!p\!\in\!\left[ \Omega\!\left(\!\log\!\Lambda\!\right)\!,\!O(\!\polylog\!\Lambda\!)\!\right]\\
\Theta\!\left(\!\frac{K\!p\!(\!1\!-\!\gamma\!)}{1+t}\frac{\log \Lambda}{I\!p\log\! \frac{\log\!\Lambda}{Ip}}\!\right)\!, \text{ if } I\!p\!\in\!\left[ \Omega\left(\!\frac{1}{\polylog\!\Lambda}\!\right),  o(\!\log\!\Lambda\!)\right]\!,
\end{cases}
\end{align}
which can be further simplified as
\begin{align}
 \overline{T}\!(\!\mathbf{G}\!)\!=\!
 \begin{cases} 
\Theta\left( \frac{Kp(1-\gamma)}{1+t}\right)\!, \text{ if }  Ip  = \Omega\left(\log\Lambda \right)\\
\Theta\!\left(\!\frac{K\!p\!(\!1\!-\!\gamma\!)}{1+t}\frac{\log \Lambda}{I\!p\log\! \frac{\log\!\Lambda}{Ip}}\!\right)\!, \text{ if } I\!p\!\in\!\left[ \Omega\left(\!\frac{1}{\polylog\!\Lambda}\!\right),  o(\!\log\!\Lambda\!)\right]\!.
\end{cases}
\end{align}
This concludes the proof of Theorem \ref{th:lawtavg_iden}.

\subsection{Proof of Lemma \ref{le:tavg_op_lb}} \label{AP:tavg_op_lb}
From~\eqref{eq:tavg_data}, we know that $l_{s,\lambda}$ and ${\Lambda-\lambda \choose t}$ are non-increasing with $\lambda$, which implies that for each time slot $s\in [S]$, the profile vector $\mathbf{L}_s$, which minimizes the delay has components of the form
\begin{align}\label{eq:Ltbc}
l_{s,\lambda} =
\begin{cases} 
 \big\lfloor \frac{d_s}{\Lambda}\big\rfloor+1\!\!\!  \hspace{0.2cm} \text{ for }  \lambda \in \left[1, 2,  \dots, A_s\right]   \\
 \big\lfloor \frac{d_s}{\Lambda}\big\rfloor  \hspace{0.6cm} \text{ for }  \lambda \in \left[A_s\!+1, A_s\!+2, \dots, \Lambda\right],
\end{cases}
\end{align} 
where $d_s=\sum_{k\in [K]}d_{s,k}$, and $A_s \triangleq d_s- \Lambda \big\lfloor \frac{d_s}{\Lambda}\big\rfloor$. Consequently, when $A_s \geq \Lambda-t$, the corresponding best-case delay $T_{s}$ for time slot $s \in [S]$ is given by
\begin{align}\label{eq:tminc2}
T_{s} =  \sum_{\lambda=1}^{\Lambda-t} \left(\left\lfloor\frac{d_s}{\Lambda}\right\rfloor+1\right) \frac{ {\Lambda-\lambda \choose t} }{{\Lambda \choose t}} =   \left( \left\lfloor\frac{d_s}{\Lambda}\right\rfloor +1 \right)\frac{\Lambda-t}{1+t},
\end{align}
while when $A_s < \Lambda-t$, this is given as 
\begin{align}\label{eq:tminc1}
&T_{s} =    \sum_{\lambda=1}^{A_s} \left(\left\lfloor\frac{d_s}{\Lambda}\right\rfloor+1\right) \frac{ {\Lambda-\lambda \choose t} }{{\Lambda \choose t}}+\sum_{\lambda=A_s+1}^{\Lambda-t} \left\lfloor\frac{d_s}{\Lambda}\right\rfloor  \frac{ {\Lambda-\lambda \choose t} }{{\Lambda \choose t}}\nonumber\\
&=   \left\lfloor\frac{d_s}{\Lambda}\right\rfloor \sum_{\lambda=1}^{\Lambda-t}  \frac{ {\Lambda-\lambda \choose t} }{{\Lambda \choose t}} + \sum_{\lambda=1}^{A_s}  \frac{ {\Lambda-\lambda \choose t} }{{\Lambda \choose t}} \nonumber\\
& =    \left( \left\lfloor\frac{d_s}{\Lambda}\right\rfloor +1 \right) \sum_{\lambda=1}^{\Lambda-t} \frac{ {\Lambda-\lambda \choose t} }{{\Lambda \choose t}}  -\sum_{\lambda=A+1}^{ \Lambda-t }  \frac{ {\Lambda-\lambda \choose t} }{{\Lambda \choose t}} \nonumber\\
&=   \left( \left\lfloor\frac{d_s}{\Lambda}\right\rfloor +1 \right) \frac{  {\Lambda \choose t+1}}{{\Lambda \choose t}}  - \frac{ {\Lambda-A_s \choose t+1}}{{\Lambda \choose t}} \nonumber\\
& =\left( \left\lfloor\frac{d_s}{\Lambda}\right\rfloor +1 \right)\frac{\Lambda-t}{1+t}- \frac{ {\Lambda-A_s \choose t+1}}{{\Lambda \choose t}}.
\end{align}
We denote $\mathbf{S}_2 \subseteq [S]$ to be the set of time slots for which $A_s < \Lambda-t$. Then, the average delay corresponding to the optimal user-to-cache state association $\hat{\mathbf{G}}$ is lower bounded by 
\begin{align} \label{eq:opt_lb}
  \overline{T}^* &\geq \frac{1}{S} \sum_{s\in [S]/\mathbf{S}_2} \left( \left\lfloor\frac{d_s}{\Lambda}\right\rfloor +1 \right)\frac{\Lambda-t}{1+t}\nonumber\\ &+  \frac{1}{S}  \sum_{s\in [\mathbf{S}_2]} \left(\left( \left\lfloor\frac{d_s}{\Lambda}\right\rfloor +1 \right)\frac{\Lambda-t}{1+t}- \frac{ {\Lambda-A_s \choose t+1}}{{\Lambda \choose t}}\right)    \nonumber\\
 & =\frac{1}{S} \sum_{s\in [S]} \left( \left\lfloor\frac{d_s}{\Lambda}\right\rfloor +1 \right)\frac{\Lambda-t}{1+t}  -\frac{1}{S}   \sum_{s\in [\mathbf{S}_2]}  \frac{ {\Lambda-A_s \choose t+1}}{{\Lambda \choose t}}.   
\end{align}
 This concludes the proof of Lemma \ref{le:tavg_op_lb}.

\subsection{Proof of Theorem \ref{th:tavg_alg1_ub}} \label{AP:tavg_alg1_ub}

We denote $v^{\mathcal{G}_1}_{s,\lambda}$, $v^{\mathcal{G}_2}_{s,\lambda}$, and $v^{\mathcal{G}}_{s,\lambda}$ to be the scaled loads calculated using transformed user demand matrix $\mathbf{D}$ (Step 00 of \textbf{Algorithm \ref{alg:algo_1}}) of each cache state $\lambda\in[\Lambda]$ at time slot $s\in[S]$ following the user-to-cache state association given by $\mathcal{G}_1$, $\mathcal{G}_2$, and $\mathcal{G}$ respectively. It is straightforward to see from step 03 of \textbf{Algorithm \ref{alg:algo_1}} that $v^{\mathcal{G}_1}_{s,\lambda} =O \left(\frac{\log S}{\log \log S} \right)$ $\forall s\in [S],  \ \lambda \in [\Lambda]$. By combining Lemma 15 and Lemma 18 of \cite{im_focs15}, we have $v^{\mathcal{G}_2}_{s,\lambda}=O \left(1\right)$ $\forall s\in [S],  \ \lambda \in [\Lambda]$. Thus, the combined scaled load of each cache $\lambda \in [\Lambda]$ at each time slot $s \in[S]$ is given by $v^{\mathcal{G}}_{s,\lambda}=O \left(\frac{\log S}{\log \log S} \right).$ 

To complete the proof, we now proceed to convert the scaled load of each cache $\lambda\in[\Lambda]$ at time slot $s\in[S]$ to the actual load. Based on the assumption that for each time slot $s\in[S]$,  $\sum_{k\in[K]} d_{s,k} \geq \Lambda$, we have  $\bar{d}_{s,k}  = \min\left(\frac{\Lambda \ d_{s,k}}{\sum_{i\in[K]} d_{s,i}},1\right) = \frac{\Lambda \ d_{s,k}}{\sum_{i\in[K]} d_{s,i}}$ $\forall \ s\in [S], k\in[K]$. Then, the actual load corresponding to user-to-cache state association $\mathcal{G}$ is given by
 \begin{align}
 v_{s,\lambda}  = \frac{\sum_{k\in[K]} d_{s,k}}{\Lambda}v^{\mathcal{G}}_{s,\lambda}=  O \left(\frac{\sum_{k\in[K]} d_{s,k}}{\Lambda} \frac{\log S}{\log \log S} \right) \nonumber
 \end{align}
$\forall\ s\in [S],\   \lambda \in [\Lambda]$. Consequently, from \eqref{eq:tavg_data}, we have
\begin{align}
  \overline{T}(\mathbf{G}) &= O\left(\frac{1}{S}\sum_{s=1}^{S}\sum_{\lambda=1}^{\Lambda-t}   \frac{\sum_{k\in[K]} d_{s,k}}{\Lambda} \frac{\log S}{\log \log S}   \frac{{\Lambda-\lambda \choose t}}{{\Lambda \choose t}}\right)\\
  &= O\left(\frac{\log S}{\log \log S}\frac{1}{S}\sum_{s=1}^{S}  \frac{\sum_{k\in[K]} d_{s,k}}{\Lambda}  \frac{\Lambda-t}{1+t}\right).
\end{align}
 We also have from \eqref{eq:opt_lb} that
 \begin{align}
  \overline{T}^* =\Omega\left(\frac{1}{S} \sum_{s\in [S]} \frac{\sum_{k\in[K]} d_{s,k}}{\Lambda} \frac{\Lambda-t}{1+t}\right).   
\end{align}
This concludes the proof of Lemma \ref{le:tavg_op_lb}.


\bibliographystyle{IEEEtran}
\bibliography{IEEEabrv,main}

\begin{thebibliography}{10}
\providecommand{\url}[1]{#1}
\csname url@samestyle\endcsname
\providecommand{\newblock}{\relax}
\providecommand{\bibinfo}[2]{#2}
\providecommand{\BIBentrySTDinterwordspacing}{\spaceskip=0pt\relax}
\providecommand{\BIBentryALTinterwordstretchfactor}{4}
\providecommand{\BIBentryALTinterwordspacing}{\spaceskip=\fontdimen2\font plus
\BIBentryALTinterwordstretchfactor\fontdimen3\font minus
  \fontdimen4\font\relax}
\providecommand{\BIBforeignlanguage}[2]{{%
\expandafter\ifx\csname l@#1\endcsname\relax
\typeout{** WARNING: IEEEtran.bst: No hyphenation pattern has been}%
\typeout{** loaded for the language `#1'. Using the pattern for}%
\typeout{** the default language instead.}%
\else
\language=\csname l@#1\endcsname
\fi
#2}}
\providecommand{\BIBdecl}{\relax}
\BIBdecl

\bibitem{cisco2020}
``{Cisco Annual Internet Report (2018-2023)},'' White paper, March 2020.

\bibitem{shanmugam_it13}
K.~Shanmugam, N.~Golrezaei, A.~G. Dimakis, A.~F. Molisch, and G.~Caire,
  ``Femtocaching: Wireless content delivery through distributed caching
  helpers,'' \emph{IEEE Trans. Inf. Theory}, vol.~59, no.~12, pp. 8402--8413,
  2013.

\bibitem{golrezaei_wc14}
N.~Golrezaei, P.~Mansourifard, A.~F. Molisch, and A.~G. Dimakis, ``Base-station
  assisted device-to-device communications for high-throughput wireless video
  networks,'' \emph{IEEE Trans. Wireless Commun.}, vol.~13, no.~7, pp.
  3665--3676, 2014.

\bibitem{ji_jsac16}
M.~Ji, G.~Caire, and A.~F. Molisch, ``Wireless device-to-device caching
  networks: Basic principles and system performance,'' \emph{IEEE J. Sel. Areas
  Commun.}, vol.~34, no.~1, pp. 176--189, 2016.

\bibitem{man_it14}
M.~A. Maddah-Ali and U.~Niesen, ``Fundamental limits of caching,'' \emph{IEEE
  Trans. Inf. Theory}, vol.~60, no.~5, pp. 2856--2867, May 2014.

\bibitem{langberg2011hardness}
M.~Langberg and A.~Sprintson, ``On the hardness of approximating the network
  coding capacity,'' \emph{IEEE Trans. Inf. Theory}, vol.~57, no.~2, pp.
  1008--1014, 2011.

\bibitem{bar2011index}
Z.~Bar-Yossef, Y.~Birk, T.~Jayram, and T.~Kol, ``Index coding with side
  information,'' \emph{IEEE Trans. Inf. Theory}, vol.~57, no.~3, pp.
  1479--1494, 2011.

\bibitem{maleki2014index}
H.~Maleki, V.~R. Cadambe, and S.~A. Jafar, ``Index coding—an interference
  alignment perspective,'' \emph{IEEE Trans. Inf. Theory}, vol.~60, no.~9, pp.
  5402--5432, 2014.

\bibitem{fragouli2007network}
C.~Fragouli and E.~Soljanin, \emph{Network coding fundamentals}.\hskip 1em plus
  0.5em minus 0.4em\relax Now Publishers Inc, 2007.

\bibitem{effros2015equivalence}
M.~Effros, S.~El~Rouayheb, and M.~Langberg, ``An equivalence between network
  coding and index coding,'' \emph{IEEE Trans. Inf. Theory}, vol.~61, no.~5,
  pp. 2478--2487, 2015.

\bibitem{shanmugam_it16}
K.~Shanmugam, M.~Ji, A.~M. Tulino, J.~Llorca, and A.~G. Dimakis,
  ``Finite-length analysis of caching-aided coded multicasting,'' \emph{IEEE
  Trans. Inf. Theory}, vol.~62, no.~10, pp. 5524--5537, Oct. 2016.

\bibitem{lampiris_jsac18}
E.~Lampiris and P.~Elia, ``Adding transmitters dramatically boosts
  coded-caching gains for finite file sizes,'' \emph{IEEE J. Sel. Areas
  Commun.}, vol.~36, no.~6, pp. 1176--1188, Jun. 2018.

\bibitem{jin_toc_19}
S.~Jin, Y.~Cui, H.~Liu, and G.~Caire, ``A new order-optimal decentralized coded
  caching scheme with good performance in the finite file size regime,''
  \emph{IEEE Trans. Commun.}, vol.~67, no.~8, pp. 5297--5310, Aug. 2019.

\bibitem{parrinello_it20}
E.~Parrinello, A.~{\"U}nsal, and P.~Elia, ``Fundamental limits of coded caching
  with multiple antennas, shared caches and uncoded prefetching,'' \emph{IEEE
  Trans. Inf. Theory}, vol.~66, no.~4, pp. 2252--2268, Apr. 2020.

\bibitem{malik_tcom21}
A.~Malik, B.~Serbetci, E.~Parrinello, and P.~Elia, ``Fundamental limits of
  stochastic shared-cache networks,'' \emph{IEEE Trans. Commun.}, vol.~69,
  no.~7, pp. 4433--4447, 2021.

\bibitem{stojmenovic_ijcm98}
I.~Stojmenovi{\'c} and A.~Zoghbi, ``Fast algorithms for generating integer
  partitions,'' \emph{Int. J. Comput. Math.}, vol.~70, no.~2, pp. 319--332,
  1998.

\bibitem{chekuri_siam04}
C.~Chekuri and S.~Khanna, ``On multidimensional packing problems,'' \emph{SIAM
  J. Comput.}, vol.~33, no.~4, pp. 837--851, Apr. 2004.

\bibitem{im_focs15}
S.~Im, N.~Kell, J.~Kulkarni, and D.~Panigrahi, ``{Tight bounds for online
  vector scheduling},'' in \emph{Proceedings of the 56th IEEE Annual Symposium
  on Foundations of Computer Science}, Berkeley, CA, Oct. 2015, pp. 525--544.

\bibitem{newman2005power}
M.~E. Newman, ``Power laws, pareto distributions and zipf's law,''
  \emph{Contemporary physics}, vol.~46, no.~5, pp. 323--351, 2005.

\bibitem{caraux_92}
G.~Caraux and O.~Gascuel, ``{Bounds on distribution functions of order
  statistics for dependent variates},'' \emph{Statist. Probab. Lett.}, vol.~14,
  no.~2, pp. 103--105, May 1992.

\bibitem{tang_arxiv19}
\BIBentryALTinterwordspacing
W.~Tang and F.~Tang, ``The poisson binomial distribution -- old \& new,'' 2019.
  [Online]. Available: \url{https://arxiv.org/pdf/1908.10024.pdf}
\BIBentrySTDinterwordspacing

\bibitem{gascuel_92}
O.~Gascuel and G.~Caraux, ``{Bounds on expectations of order statistics via
  extremal dependences},'' \emph{Statist. Probab. Lett.}, vol.~15, no.~2, pp.
  143--148, Sep. 1992.

\bibitem{martin_98}
M.~Raab and A.~Steger, ``{Balls into bins --- A simple and tight analysis},''
  in \emph{International Workshop on Randomization and Approximation Techniques
  in Computer Science}, 1998, pp. 159--170.

\end{thebibliography}

\end{document}